\newtheorem{theorem}{Theorem}
\newtheorem{lemma}[theorem]{Lemma}
\newtheorem{claim}[theorem]{Claim}
\newtheoremstyle{restate}{}{}{\itshape}{}{\bfseries}{~(restated).}{.5em}{\thmnote{#3}}
\theoremstyle{restate}
\newcommand{\algwidth}{0.97\textwidth}
\newcommand{\E}{\mathop{\mathbb{E}}}
\newcommand{\eps}{\varepsilon}
\DeclareMathOperator*{\sign}{sign}
\newcommand{\cI}{\mathcal{I}}
\newcommand{\cP}{\mathcal{P}}
\newcommand{\cQ}{\mathcal{Q}}
\newcommand{\cW}{\mathcal{W}}
\newcommand{\cD}{\mathcal{D}}
\newcommand{\cT}{\mathcal{T}}
\newcommand{\cS}{\mathcal{S}}
\newcommand{\bU}{\mathbf{U}}
\newcommand{\bs}{\mathbf{s}}
\newcommand{\bT}{\mathbf{T}}
\newcommand{\bI}{\mathbf{I}}
\newcommand{\bt}{\mathbf{t}}
\newcommand{\adv}{\mathrm{adv}}
\newcommand{\total}{\mathrm{tot}}
\newcommand{\cache}{\mathrm{cac}}
\newcommand{\msg}{{M_A}}
\newcommand{\cont}{\mathrm{cont}}
\newcommand{\addr}{\mathrm{addr}}
\newcommand{\csim}{C_{\mathrm{sim}}}
\renewcommand{\log}{\lg}
\title{Super-Logarithmic Lower Bounds for Dynamic Graph Problems}
\author{Kasper Green Larsen\thanks{\texttt{larsen@cs.au.dk}. Supported by a DFF Sapere Aude Research Leader Grant No. 9064-00068B.}\\Aarhus University \and Huacheng Yu\thanks{\texttt{yuhch123@gmail.com}. Supported by Simons Junior Faculty Award - AWD1007164.}\\Princeton University}
\date{}
\begin{document}

\maketitle

\begin{abstract}
    In this work, we prove a $\tilde{\Omega}(\lg^{3/2} n )$ unconditional lower bound on the maximum of the query time and update time for dynamic data structures supporting reachability queries in $n$-node directed acyclic graphs under edge insertions. This is the first super-logarithmic lower bound for any natural graph problem. In proving the lower bound, we also make novel contributions to the state-of-the-art data structure lower bound techniques that we hope may lead to further progress in proving lower bounds.
\end{abstract}


\section{Introduction}
Graph problems are among the most well-studied topics in algorithms and data structure, with a wealth of new exciting results every year. Just in the recent year, this included a near-linear time algorithm for max-flow~\cite{maxflow} and single-source shortest paths with negative weights~\cite{ssspneg}. The area of fine-grained complexity has provided a large number of complementary conditional lower bounds via reductions from a few carefully chosen conjectured hard problems. This includes lower bounds for graph Diameter and Radius which follow by reduction from All-Pairs-Shortest-Paths (APSP)~\cite{apsp}, or lower bounds for dynamic graph problems, such as Dynamic Reachability, that follow both from the 3SUM conjecture~\cite{multiphase}, the Online Matrix-Vector conjecture~\cite{OMV} and the Strong Exponential Time Hypothesis~\cite{seth}.

However, if we turn to unconditional lower bounds, the situation is much more depressing. The strongest known unconditional lower bound for any natural graph problem, is an $\Omega(\lg n)$ lower bound on the maximum of the update time and query time for the dynamic maintenance of an undirected graph with connectivity queries. This lower bound is due to P{\v a}tra\c{s}cu and Demaine~\cite{PD04a} and dates back to 2004. In the meantime, stronger techniques for proving lower bounds for dynamic data structures have been developed, including a technique by Larsen~\cite{Larsen12a} for proving $\tilde{\Omega}(\lg^2 n)$ lower bounds for dynamic problems with $\Omega(\lg n)$-bit outputs to queries, and a technique for proving $\tilde{\Omega}(\lg^{3/2} n)$ lower bounds for dynamic decision problems ($1$-bit outputs) due to Larsen, Weinstein and Yu~\cite{LWY18}. Unfortunately, none of these techniques have so far been successfully applied to a natural graph problem and the strongest unconditional lower bound remains the $\Omega(\lg n)$ bounds by P{\v a}tra\c{s}cu and Demaine. The main contribution of this work, is to provide the first such $\omega(\lg n)$ lower bound for a graph problem. Concretely, we prove the following
\begin{theorem}
\label{thm:main}
Any data structure for Dynamic Reachability in $n$-node directed acyclic graphs under edge insertions, with worst case update time $t_u$, expected query time $t_q$ and $w$-bit memory cells for $w= \Omega(\lg n)$, must satisfy 
\[
t_q = \Omega\left( \frac{\lg^{3/2} n}{\lg^2(t_u w)} \right).
\]
\end{theorem}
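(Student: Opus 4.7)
The plan is to adapt the framework of \cite{LWY18}, which establishes $\tilde{\Omega}(\log^{3/2} n)$ cell-probe lower bounds for dynamic decision problems, to the specific setting of reachability in an insertion-only DAG. The central design task is a hard distribution on edge-insertion sequences interleaved with reachability queries whose $1$-bit answer depends non-trivially on every ``era'' of past updates. A natural candidate partitions $n$ vertices into $k = \Theta(\log n)$ layers $V_0, \dots, V_k$ of roughly equal size, inserts random bipartite edges of a carefully tuned density between consecutive layers across time, and asks reachability queries from a random $s\in V_0$ to a random $t\in V_k$. The edge densities should be tuned so that the expected number of $s$--$t$ paths is $\Theta(1)$, keeping the output ``balanced'' enough to carry information.

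Second, I would plug this distribution into the chronogram/epoch decomposition: partition the update sequence into $\Theta(\log n)$ epochs of geometrically increasing size and, via the cell-sampling and message-encoding tools of \cite{LWY18}, argue that any data structure must with constant probability read cells written in each epoch in order to answer a typical query. The key quantitative ingredient is a per-epoch information lower bound: conditioned on the updates outside a single epoch, a random reachability answer still has $\Omega(1)$ bits of residual entropy coming from the edges inserted in that epoch.

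Third, the simulation argument of \cite{LWY18} would then convert this into the stated trade-off. Summing the per-epoch contribution across $\Theta(\log n)$ epochs, balanced against the $\sqrt{\log n}$-style slack from cell sampling, yields $\tilde{\Omega}(\log^{3/2} n / \log^2(t_u w))$, with the $\log^2(t_u w)$ factor arising from encoding cell addresses and contents during the epoch simulation, exactly as in \cite{LWY18}.

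The step I expect to be the hardest is proving the per-epoch information bound under reachability. Unlike XOR- or sum-based dynamic problems where the marginal effect of a single update on the query answer is crisp, reachability is a monotone, transitive predicate whose sensitivity to any particular epoch is mediated by the edges in all other epochs: edges inserted later may either create brand-new $s$--$t$ paths or be ``hidden'' behind existing ones, while edges inserted earlier can already determine the answer before the epoch's updates arrive. Overcoming this monotonicity while staying insertion-only, and simultaneously ensuring the needed information-theoretic independence across epochs, is the technical heart of what remains; it will likely require tuning edge densities per epoch and restricting queries to source--sink pairs whose reachability is provably sensitive to the layer controlled by the target epoch.
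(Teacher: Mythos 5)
Your proposal correctly identifies the right high-level framework (the chronogram decomposition plus the simulation machinery of Larsen--Weinstein--Yu), but the step you flag as ``the hardest'' is precisely where the approach, as you set it up, breaks, and the quantitative target you set for it is also too weak. First, a per-epoch bound of the form ``conditioned on the other epochs, the answer retains $\Omega(1)$ bits of residual entropy'' is only enough to drive the classical Fredman--Saks argument and yields $\Omega(\lg n)$, not $\lg^{3/2} n$. To get the extra $\sqrt{\lg n}$ factor you need, for each epoch, a \emph{static} lower bound of the following strength: no one-way protocol with $o(n_i)$ bits of communication can predict the epoch-$i$ sub-answer with advantage better than $n_i^{-\Omega(1)}$, and moreover this must hold under the conditional distribution induced by fixing the other epochs. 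Your proposal gives no route to such a bound for raw reachability in a random layered graph; indeed the paper never proves one for reachability directly. Second, tuning edge densities so that the \emph{overall} answer is balanced does not resolve the non-decomposability issue: what enters the simulation is the conditional bias of the single-epoch sub-answer given the other epochs, and in an OR/AND-type predicate that sub-answer is either blocked by other epochs or heavily biased. The paper's resolution is the opposite of balancing: it makes each epoch's sub-answer deliberately \emph{rare} (probability roughly $1/\lg n$), so other epochs almost never block the target epoch, and then it extends the LWY18 simulation theorem to ``weighted'' problems with a bias parameter $\beta$ so that biased answers can still be converted into communication advantage. Your plan does not anticipate needing this extension, and without it the advantage measure of LWY18 is trivially large for a biased predicate.

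Concretely, the paper avoids attacking reachability head-on: it introduces an intermediate dynamic problem ($0$-XOR in Multiple Butterflies), where each epoch assigns random $b$-bit strings to the edges of a Butterfly graph and a query asks whether the XOR along some source--sink path vanishes in at least one Butterfly. The XOR structure is what makes the per-epoch static communication bound provable: one forms meta-queries that XOR the answers over $k = n/\lg n$ node-disjoint paths (restoring exact zero bias after weighting), and bounds the advantage by an $r$-th moment computation that exploits the cancellation $\E[\phi_\bI(s,t)] = 0$ whenever some edge appears on exactly one query path. Reachability then enters only through a reduction: each Butterfly node is blown up into $2^b$ copies so that XOR-tracking becomes path-following, and two $B$-ary trees route a single $(s,t)$ reachability query to the appropriate source--sink pair in each Butterfly, all under insertions only. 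If you want to salvage your layered-random-graph distribution, you would still have to (i) prove a small-advantage one-way communication lower bound for the per-epoch reachability sub-problem under its conditional (biased) distribution, and (ii) handle the bias inside the simulation; both are exactly the gaps the paper fills with the weighted simulation theorem and the XOR/meta-query/moment argument, and neither is supplied by your sketch.
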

In addition to the $\Omega(\lg n)$ lower bound by P{\v a}tra\c{s}cu and Demaine, the only other known lower bounds for Dynamic Reachability are two \emph{threshold} lower bounds. First, P{\v a}tra\c{s}cu and Thorup~\cite{PT11} showed that for Dynamic Reachability, even in the undirected case, any data structure with update time $t_u = o(\lg n)$ must have query time $t_q = n^{1-o(1)}$. As a complimentary result, Clifford et al.~\cite{CGL15} showed that any data with query time $t_q = o(\lg n/\lg \lg n)$ and space $n \lg^{O(1)} n$ (for graphs with $n \lg^{O(1)} n$ edges) must have an update time of $t_u = n^{1-o(1)}$. These bounds however do not exceed $\Omega(\lg n)$ on the maximum of $t_u$ and $t_q$.

In addition to proving the first $\omega(\lg n)$ lower bound for a natural graph problem, we also make several contributions on a technical level, hopefully paving the way for further $\omega(\lg n)$ lower bounds for dynamic data structure problems. In the following section, we start by giving a high level overview of previous techniques for proving data structure lower bounds. We then give a more technical description of our new approach and argue how we overcome some of the obstacles that have prevented previous works from breaking the $\lg n$ barrier for graph problems.

\subsection{Lower Bounds in the Cell Probe Model}
\label{sec:intropre}
Unconditional lower bounds on the operation time of data structures are proved in the cell probe model of Yao~\cite{Yao81}. In this model, a data structure consists of a random access memory, partitioned into cells of $w$ bits each. 

For \emph{static} data structures, an input to a data structure problem is preprocessed into a memory representation. For space usage $S$, the data structure uses memory cells of integer addresses $[S] = \{0,\dots,S-1\}$ to represent the input. Preprocessing is free of charge and only the space usage is measured.

To answer a query, the data structure is allowed to read, or \emph{probe}, up to $t_q$ memory cells and must announce the answer to the query based on the contents of the probed cells. In this model, computation is free of charge, and the addresses of the cells to probe may be determined as an arbitrary function of the query and the contents of previously probed cells. Formally, this can be modelled by having a decision tree for each query. Each node of a tree is labeled with a cell address in $[S]$ and has $2^w$ children, corresponding to each possible contents of that cell. The leaves of the trees are labeled with the answer to the query. 

When proving lower bounds for static data structures, we study the tradeoff between $S$ and $t_q$. 

For \emph{dynamic} data structures, we also need to support updates to the underlying data. This could, for example, be edge insertions or deletions in a graph. For dynamic data structures, we assume cells have integer addresses in $[2^w] = \{0,\dots,2^w-1\}$. Queries are still answered as in static data structures. To process an update, a data structure may probe up to $t_u$ memory cells. While probing a cell, the data structure may also change the contents of that cell. Similarly to queries, we only count the number of cell probes when stating update time and any computation is free of charge.

\paragraph{Previous Techniques for Static Lower Bounds.}
The current state-of-the-art technique for proving lower bounds for static data structures, is the \emph{cell sampling} technique by Panigrahy et al.~\cite{PTW10} that was later refined in~\cite{Larsen:2012:focs} and used in early work on hashing by Siegel~\cite{siegel}. As it plays a central role in more advanced techniques for dynamic data structures, including ours, we briefly sketch it here. Also, to present the different techniques in a coherent manner, we use the same example data structure problem when discussing them. The example we use is \emph{2d range sum}. In the static version of this problem, the input is a set of $n$ points in 2d having integer coordinates on the $[n] \times [n]$ grid. Each point is assigned a $b$-bit weight. A query is also specified by a point $(x,y) \in [n] \times [n]$ and the goal is to return the sum of the weights assigned to input points $(x',y')$ with $x' \leq x$ and $y' \leq y$.

To prove a lower bound for this problem using the cell sampling technique, we use an \emph{encoding argument}. The idea is to consider a uniform random assignment of weights to a fixed set of $n$ points. Assuming the availability of an efficient data structure for 2d range sum, we now give an encoding and decoding procedure for reconstructing most of the random weights. Intuitively, if we can reconstruct $m$ weights from the encoding, then the encoding length must be at least $m b$ bits. The idea is thus to give an encoding with a short length as a function of $S$ and $t_q$. For this, the cell sampling technique first constructs a data structure on the random input. For a sampling probability $p = (n/(Sw))^{O(1)}$, we randomly sample each memory cell independently with probability $p$. The encoding is then the contents and addresses of the sampled cells, costing an expected $pS(w + \lg S) = O(pSw) = o(n)$ bits. To reconstruct many weights from this encoding, a decoder can now simulate every possible query $(x,y) \in [n] \times [n]$. When simulating a query, the decoder runs the query algorithm, and for every probed cell, checks whether it is in the random sample that was encoded. If so, the simulation can continue and otherwise the decoder discards the query and moves on to the next query point. Note that for any fixed query, the chance that all $t_q$ cells it probes are in the sample is $p^{t_q}$. If $t_q = o(\lg n/\lg(Sw/n))$, this is $n^{-o(1)}$. Since there are $n^2$ queries, the decoder will succeed in recovering the answer to $n^{2-o(1)}$ queries. If these queries are sufficiently different (the rank is $\Omega(n)$ when interpreted as linear combinations over the point weights), then they together reveal $\Omega(nb)$ bits of information about the weights. This is a contradiction since the encoding size was $o(n)$ and thus one concludes $t = \Omega(\lg n/\lg(Sw/n))$. We note that such a rank argument for 2d range sum was first used by P{\v a}tra\c{s}cu~\cite{Patrascu07} in a communication complexity based lower bound proof~\cite{Patrascu07,PT06}.

\paragraph{The Chronogram Technique for Dynamic Lower Bounds.}
The basic idea in most of the central techniques for proving dynamic lower bounds, is to boost a static lower bound by a logarithmic factor. For this reason, these techniques only apply to so-called \emph{decomposable problems}. To explain this, we start by presenting the seminal chronogram technique by Fredman and Saks~\cite{FS89}. The main idea in their technique, is to consider a sequence of $n$ random updates, partitioned into geometrically decreasing sized \emph{epochs} of $n_i=\beta^i$ updates for some $\beta \geq 2$. The first epoch of updates to be processed is epoch number $\lg_\beta n - 1$, then comes epoch $\lg_\beta n - 2$ and so forth, until epoch $1$. Finally, a random query is asked after the epochs of updates have been processed. For our running example of 2d range sum, each epoch of updates would insert $n_i$ points with uniform random weights and the random query is again a point in $[n] \times [n]$. The answer to the query is still the sum of all weights assigned to points $(x',y')$ with $x' \leq x$ and $y' \leq y$. The key property of this data structure problem, is that it is \emph{decomposable}. By this, we mean that if we know all updates of epochs $j \neq i$ as well as the answer to a query $(x,y)$, then we also know the answer to the query if only the updates of epoch $i$ were present. For 2d range sum, this follows simply by subtracting off the contribution to the sum from epochs $j \neq i$. From this observation, we can intuitively use the dynamic data structure to obtain an efficient static data structure for any epoch $i$.

We explain this in more detail. From a dynamic data structure and a sequence of $n$ updates partitioned into epochs, consider the memory cells of the data structure after having processed all the updates. Each cell of the memory is then associated with the epoch in which it was \emph{updated the last time}. Let $C_i$ denote the cells associated with epoch $i$. There are now a few crucial observations. First, for any epoch $i$, if the updates of the epochs are independent, then the cells in $C_j$ for $j>i$ cannot contain any information about epoch $i$. This is because these cells were last written \emph{before} the updates of epoch $i$ arrived. Secondly, the cells in epochs $j<i$ are very few. Concretely, with an update time of $t_u$, we have $|C_j| \leq \beta^j t_u$ so for $\beta = \omega(t_u w)$, we have $\sum_{j<i} |C_j|w = o(\beta^i)$. This means that all cells that are changed in epochs after $i$ contain $o(1)$ bits on average about each of the $\beta^i$ updates of epoch $i$. Intuitively, if the answer to the random query depends a lot on the updates of epoch $i$, then the data structure has to probe $\Omega(1)$ cells from $C_i$ as other epochs contain too little information to correctly answer the query. Since the sets $C_i$ are disjoint, we may sum this lower bound over all epochs to conclude $t_q = \Omega(\lg_\beta n) = \Omega(\lg n/\lg(t_u w))$.

\paragraph{Static Data Structures with Pre-Initialized Memory and a Cache.}
The above approach can be implicitly seen as proving a static lower bound on each epoch. In this work, we make this connection clearer by defining a special form of static data structure that fits the reduction. We define a static cell probe data structure with pre-initialized memory and a cache, as a static data structure that before seeing the input may pre-initialize all memory cells (with addresses in $[2^w]$) to arbitrary contents (independent of the input). Upon receiving its input, it updates up to $S$ of the memory cells and finally it creates a cache of up to $S_\cache$ memory cells. On a query, the data structure probes memory cells and must announce the answer to the query based on the contents of the probed cells and the cache. The cell probed in each step may be an arbitrary function of the cache and all previously probed cells. Compared to a standard static cell probe data structure, it thus has free access to the cache (which does not count towards the query time), plus it may probe into pre-initialized memory. The query time of a data structure with pre-initialized memory and a cache consists of two parameters. We let $t_\total$ denote the total number of cells probed when answering the query, counting probes to both updated memory cells and cells that were not changed when seeing the input. We let $t_q$ denote the number of probes to updated cells.

The chronogram technique of Fredman and Saks can now be seen as a reduction to a static data structure problem with pre-initialized memory and a cache. To see this, consider again the 2d range sum problem. Among the epochs $\lg_\beta n-1, \dots, 1$, there must be an epoch $i$ where the expected number of probes to cells in $C_i$ is $O(t_q/\lg_\beta n)$ simply by disjointness of the sets $C_i$. Focus on such an epoch $i$. We now obtain a static data structure with pre-initialized memory and a cache for the static problem of 2d range sum on $n_i=\beta^i$ points as follows: Before seeing the input $n_i$ points, we pre-initialize memory by running a hard-coded sequence of updates for epochs $j > i$ using the dynamic data structure. As updates are independent across epochs, this can be done without knowing the input of epoch $i$. Upon receiving the input, we interpret it as the updates of epoch $i$ and run the updates on the dynamic data structure. This updates at most $S = t_u n_i$ cells.  Finally, we run a hard-coded sequence of updates for epochs $j < i$ on the dynamic data structure and put all changed cells in the cache. We thus have $S_\cache \leq \sum_{j < i} \beta^j t_u = o(n_i/w)$. To answer a query on the static data structure, we run the query algorithm of the dynamic data structure. Whenever it probes a cell, if that cell was updated during epoch $j < i$, the contents of the cell is in the cache. Otherwise, we simply probe the memory. When the query algorithm of the dynamic data structure has finished, we exploit that we have a decomposable search problem by subtracting off the contributions to the query answer from the hard-coded epochs $j \neq i$. If $t_q$ is the query time of the dynamic data structure, then $t_\total \leq t_q$ for the static data structure. Furthermore, we have $t'_q = O(t_q /\lg_\beta n)$, where $t'_q$ denotes the number of probes the static data structure makes to updated cells.

In light of the above, the chronogram technique now boils down to proving a $t'_q = \Omega(1)$ lower bound on the number of probes to updated cells for a static data structure with pre-initialized memory and a cache when the data structure updates $S = n_i t_u$ cells and has a cache of size $S_\cache = o(n_i/w)$ cells.

\paragraph{Super-Logarithmic Lower Bounds.}
Building on the chronogram technique, Larsen~\cite{Larsen12a} later developed a technique capable of proving lower bounds of $t_q = \Omega((\lg n/\lg(t_u w))^2)$ for dynamic data structures. His approach is very intuitive in light of the just described reduction to static data structures with pre-initialized memory and a cache: simply use the cell sampling technique (for static data structures) to prove a $t_q = \Omega(\lg n/\lg(Sw/n))$ lower bound for static data structures with pre-initialized memory and a cache. Combining this with the reduction above, a dynamic data structure with query time $t_q$ and update time $t_u$ now gives a static data structure with pre-initialized memory that updates $S=n_i t_u$ cells and has a cache of $o(n_i)$ bits for some epoch $i \in \{\lg_\beta n-1,\dots,(1/2)\lg_\beta n\}$. Furthermore, the static data structure probes $O(t_q/\lg_\beta n)$ updated cells. The lower bound from cell sampling now implies $t_q/\lg_\beta n = \Omega(\lg n_i/\lg(S w/n_i))$. Since $n_i \geq \sqrt{n}$ and $\beta = (t_u w)^{O(1)}$, this is $t_q = \Omega((\lg n/\lg(t_u w))^2)$.

At first sight, this seems like a trivial extension. However, one critical step of the cell sampling technique breaks when attempting to prove the static lower bound. Recall that in cell sampling, each memory cell is sampled independently with probability $p=(n/(Sw))^{O(1)}$. To reach a contradiction for 2d range sum, we want to sample $o(n/w)$ of the updated cells and argue that if the number of probes to updated cells is too small, then $\Omega(n)$ bits of information about the weights may be recovered from the sample and the cache (which also has size $o(n)$ bits). If the number of probes to updated cells was $o(\lg n/\lg(Sw/n))$, then there would indeed be $n^{2-o(1)}$ queries in $[n] \times [n]$ for which all their probes to updated cells are in the sample. If a decoding procedure simulates the query algorithm of any such query, it will recover its answer, since whenever it probes a cell that is not in cache and not in the sample, we know its contents is the same as after pre-initializing the memory (which is independent of the input). Now the critical observation is that we cannot detect which queries have all their probes to updated cells in the sample. For normal static data structures, we could simply discard the query if it probes outside the sample, but with a pre-initialized memory, if a cell is not in cache and not in the sample, we have no clue whether it was an updated cell that was not sampled, or if it was merely a pre-initialized cell. A decoding procedure thus has no clue which queries to simulate. Larsen circumvents this issue by requiring that every weight assigned to a point has $b \geq 3 \lg n$ bits. In this way, the encoding procedure can afford to explicitly write down a set of queries for which the simulation succeeds. Since a query costs $2 \lg n$ bits to write down, but it recovers $b = 3 \lg n$ bits of information about the weights, this still yields a contradiction.

Larsen's technique critically requires a large number of output bits such that the answer to a query reveals more bits than it takes to describe the query. Breaking the logarithmic barrier for a dynamic decision problem, i.e. a problem with a $1$-bit output, remained open for another 5 years until the work of Larsen, Weinstein and Yu~\cite{LWY18}. In their work, which is what we expand upon in this paper, they showed that a static data structure with pre-initialized memory and a cache may be used to obtain a one-way protocol for a natural communication game. In this game, Alice receives a random input to the static data structure problem and Bob receives a random query. Alice then sends Bob a message of $o(n)$ bits and Bob must answer the query correctly with probability slightly better than guessing. Larsen, Weinstein and Yu showed that a data structure with total probes $t_\total$ and $t_q$ probes to updated memory cells can be used to obtain a protocol where the chance of Bob outputting the correct answer is at least
\[
1/2 + \exp(- \tilde{O}(\sqrt{t_\total t_q})).
\]
They applied their technique to 2d range parity, which is just 2d range sum with $1$-bit weights and where the answer needs only be reported mod 2. They also showed that any protocol with $o(n)$ bits of communication can only predict the answer to a query with probability $1/2 + n^{-\Omega(1)}$. Using that $t_q = O(t_\total/\lg_\beta n)$ in the reduction from dynamic to static data structures, this finally implies $t_\total/\sqrt{\lg_\beta n} = \tilde{\Omega}(\lg n)$, i.e. a $\lg^{3/2} n$ lower bound for the original dynamic problem.

\subsection{Technical Barriers Overcome}
\label{sec:newtech}
With previous techniques described, we are now ready to give an overview of the technical barriers we overcome to prove our lower bound for Dynamic Reachability. Clearly, Reachability is a decision problem, so the most relevant previous technique is the technique by Larsen, Weinstein and Yu~\cite{LWY18} for proving $\tilde{\Omega}(\lg^{3/2} n)$ lower bounds. However several obstacles prevent an immediate application of their framework. First, all the previous techniques for dynamic lower bounds above require that the problem is decomposable. If we think about Dynamic Reachability and the reduction from dynamic to static data structures with pre-initialized memory and a cache, this would correspond to each epoch inserting $n_i$ edges of a graph $G_i$. After performing the insertions, one Reachability query is asked. For 2d range sum, the answer to a counting query is the sum over the answers on all epochs and thus is decomposable as we can subtract off contributions from epochs $j \neq i$. But if we think of Reachability, whether a node $s$ can reach a node $t$ when having numerous graphs $G_{\lg_\beta n-1},\dots,G_1$ to traverse, the answer to the Reachability query is an OR over the epochs, not a sum. That is, if $s$ can reach $t$ using the edges in graph $G_j$, then the answer to the query is \textbf{Yes} regardless of the other epochs. The problem is thus not decomposable.

To overcome this, we consider distributions over input graphs $G_i$ where the probability that a query pair of nodes $(s,t)$ can reach each other through $G_i$ is small. Concretely it is around $1/\lg n$. This implies that if we zoom in on an epoch $i$, most query pairs of nodes $(s,t)$ cannot reach each other through graphs $G_j$ with $j \neq i$. In some sense, the other epochs do not block the answer to the query and we are back at a decomposable problem.

Attempting the above creates another issue. The technique by Larsen, Weinstein and Yu critically requires that the answer to a query in the static problem is uniform random among $0$ and $1$. Said briefly, if the query answer is $0$ with probability say $1-1/\lg n$, then it is trivial for Bob to predict the query answer with probability much higher than $1/2$. Our second contribution is thus to adapt their communication game to ``biased" problems, where the query answer is not uniform random. We believe this extension, and our new lower bound, is critical for proving future lower bounds for problems that are not decomposable.

Finally, let us remark that even after having adapted the framework of Larsen, Weinstein and Yu, proving the concrete lower bound for Dynamic Reachability is far from trivial and requires numerous ideas and non-trivial problem specific reductions. We present these in later sections.

\section{Data Structures with Pre-Initialized Memory and a Cache}
In this section, we give more details on static data structures with pre-initialized memory and a cache (as discussed in Section~\ref{sec:intropre}).
Such a data structure has memory cells with addresses in $[2^w]$, where each memory cell is pre-initialized to a fixed content independent of the input data.
Cells may have different initialized contents.
Upon receiving the input data $I$, the preprocessing algorithm will update at most $S$ memory cells and write to \emph{a separate cache} of $S_{\cache}$ cells.
We call this set of $S$ memory cells the \emph{updated cells}.
The set of updated cells is allowed to depend on the input data.

Then the data structure must be able to answer queries $Q$ efficiently by a query algorithm.
The query algorithm can access both the memory and the cache.
Probing the cells in cache is free of charge. 
Let $t_{\total}(Q, I)$ be the query time on $(Q, I)$, i.e., the \emph{total} number of memory cells the query algorithm probes when answering query $Q$ on input $I$.
Let $t_q(Q, I)$ be the query time into updated cells on $(Q, I)$, i.e., the number of \emph{updated cells} it probes for query $Q$ on input $I$.
Finally, we let $t_{\total}=\E[t_{\total}(Q, I)]$ and $t_q=\E[t_q(Q, I)]$ be the expected query times.

\paragraph{Data Structure Problems with Weights and One-Way Communication Problems.}
As discussed in Section~\ref{sec:newtech}, we need to extend the technique of Larsen, Weinstein and Yu~\cite{LWY18} to handle problems where the answer to a query is not uniform random. For this, we define data structure problems with weights. 

Let $\cP$ be a data structure problem with input data from $\cI$ and queries from $\cQ$.
$\cP$ is a data structure problem with weights if $\cP$ maps query-data pairs $\cQ\times\cI$ to real numbers $[-1, 1]$.
A data structure for $\cP$ is only required to compute the \emph{sign} of $\cP(Q, I)$ for query $Q$ on input $I$.
Equivalently, one may view $\cP$ as a data structure problem with one-bit output, and $\left|\cP(Q, I)\right|$ as the \emph{weight} of the instance $(Q, I)$.

Given $\cP$ and a (product) distribution $\cD_Q\times\cD_I$ over $\cQ\times\cI$ such that $\E_I[\cP(Q, I)]=0$ for all $Q$, we define the one-way communication problem $G_{\cP}$ as follows: Alice gets a random input $I\sim\cD_I$, Bob gets a random query $Q\sim\cD_Q$, Alice sends one message $\msg$ of $C$ bits to Bob, and their goal is to maximize the weighted \emph{advantage}
\begin{equation}\label{eqn_adv}
	\E_{Q, \msg}\left[\left|\E_{I}[\cP(Q, I)\mid \msg]\right|\right].
\end{equation}
The maximum value of the advantage is denoted by $\adv(G_{\cP}, \cD_Q, \cD_I, C)$.

When $\cP(Q, I)$ only takes values in $\{-1, 1\}$, it is a normal data structure problem with one-bit output.
In this case,~\eqref{eqn_adv} is simply the bias of the output of the query $Q$ given $\msg$: $\adv(G_{\cP}, \cD_Q, \cD_I, C)$ measures how much advantage Bob has over random guessing after seeing a message from Alice.
In general, we measure this bias when the instances may have different weights.
Note that technically, we could also view it as re-weighing the distribution $\cD_Q\times\cD_I$ according to $\left|\cP(Q, I)\right|$ when calculating the bias, but the expected query times are still defined with respect to the ``unweighted'' distribution $\cD_Q\times \cD_I$.

\subsection{Simulation}
Similarly to the work of Larsen, Weinstein and Yu~\cite{LWY18}, we also show that an efficient data structure with pre-initialized memory and a cache may be used to obtain an efficient protocol for the above one-way communication game. Our simulation theorem is the following
\begin{theorem}
\label{thm:simulate}
	Let $\cP: \cQ\times\cI\rightarrow [-1,1]$ be a data structure problem with weights, and $\cD_Q\times\cD_I$ be a distribution over the queries and inputs such that $\E_I[\cP(Q, I)]=0$ for all $Q\in\cQ$ and $\left|\cP(Q, I)\right|\geq \beta$ for some $\beta>0$.
	If there is a data structure $D$ with pre-initialized memory and a cache for $\cP$ such that $D$ has at most $S$ updated cells, $S_{\cache}$ cells in cache, expected query time $t_{\total}$ and expected query time into updated cells $t_q$, then for any $p\in(0,1)$, there is a one-way communication protocol for $G_{\cP}$ such that
	\[
		\adv(G_{\cP}, \cD_Q, \cD_I, (8pS+S_{\cache})\cdot w)\geq 2^{-O(\sqrt{t_{\total}(t_q\log 1/p+\log 1/\beta)}\cdot \log1/p)}-2^{-\Omega(pS)}.
	\]
\end{theorem}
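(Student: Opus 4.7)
The plan is to construct Alice's message as a random subsample of the data structure's updated cells plus the cache, and to bound the resulting advantage by adapting the LWY simulation argument to the weighted regime. First, Alice runs preprocessing on $I$ to obtain the set $U$ of at most $S$ updated cells, the cache, and all contents. She forms a sample $\cS\subseteq U$ by including each cell of $U$ independently with probability $p$, and aborts whenever $|\cS|>8pS$, an event of probability $2^{-\Omega(pS)}$ by Chernoff that accounts for the additive $-2^{-\Omega(pS)}$ term. She then sends $(\cS,\{c_a\}_{a\in\cS},\mathrm{cache})$; with shared public randomness pre-fixing the $\leq 8pS$ slot positions, the message fits in $(8pS+S_\cache)w$ bits.

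I would analyze the advantage through the bound $\adv\geq \E_{Q,\msg,I}[f(Q,\msg)\cdot\cP(Q,I)]$ for a suitable predictor $f:\cQ\times\{\mathrm{messages}\}\to[-1,+1]$. Let $\hat y(Q,\msg)\in\{-1,+1\}$ be the sign of Bob's simulated output, obtained by running the query algorithm and answering each probe from $\cS$, the cache, or the pre-initialized content. Define the faithful event $F(Q,I,\cS)$ that every updated cell probed by the real execution of $Q$ on $I$ lies in $\cS\cup\mathrm{cache}$: under $F$ the simulation is verbatim correct, so $\hat y\cdot\cP=|\cP|\geq\beta$, and by Jensen's inequality (convexity of $p\mapsto p^k$) $\Pr[F]\geq p^{t_q}$. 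Choosing $f=\hat y$ naively yields $\adv\geq\beta p^{t_q}-\Pr[\neg F]$, which is too weak because on $\neg F$ the simulated sign can be adversarially anti-correlated with $\cP$.

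To replace $f=\hat y$ by a stability-based predictor, I would follow LWY and partition the sampling into $k$ rounds of rate $p^{1/k}$, letting Bob rerun the simulation after each round and commit to $\hat y$ only once the simulated output has stabilized across sufficiently many consecutive rounds. Each change of the simulated output requires encountering a probed updated cell not yet in $\cS$, so the number of changes is bounded by $t_\total$; a telescoping hybrid over rounds then isolates a set of $(Q,I,\cS)$ of mass at least $p^{t_q}\cdot 2^{-O(k\log 1/p+\log 1/\beta)}$ on which the committed prediction is both stable and faithful, while stability bounds the anti-correlation on the complement by $2^{-\Omega(k)}$. Optimizing $k\approx\sqrt{t_\total/(t_q\log 1/p+\log 1/\beta)}$ produces the exponent $\sqrt{t_\total(t_q\log 1/p+\log 1/\beta)}\cdot\log 1/p$; the $\log 1/\beta$ inside the square root reflects the extra rounds needed, in the weighted setting, to distinguish a $\beta$-magnitude signal from noise.

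The main obstacle is the anti-correlation control on $\neg F$ when $\cP$ takes arbitrary values in $[-1,+1]$: in LWY's balanced case the simulated sign on divergence is roughly independent of the true sign and cancels, but here $\E_I[\cP(Q,I)]=0$ aggregates terms of different magnitudes, so executing the telescoping argument requires a potential function that tracks simultaneously the stabilization event and the weight-accumulated bias toward $\sign(\cP)$. A secondary subtlety is pinning down the encoding so that $(\cS,\{c_a\}_{a\in\cS},\mathrm{cache})$ fits in exactly $(8pS+S_\cache)w$ bits rather than incurring an extra address overhead of $|\cS|\log(2^w)$; I expect to handle this via the public-randomness slot fixing indicated above, combined with a union bound over the abort event.
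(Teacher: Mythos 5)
There is a genuine gap at the heart of your plan: the ``stability-based predictor'' does not address the real difficulty, and the step where ``stability bounds the anti-correlation on the complement by $2^{-\Omega(k)}$'' has no justification. The obstacle in this model is that Bob cannot certify correctness of his simulation: when a probed cell is outside the sample and the cache, he cannot tell whether it is a pre-initialized cell (so his value is right) or an updated-but-unsampled cell (so his value may be wrong). Consequently a simulation can be \emph{stably wrong}, and the data structure can arrange for the stably-wrong answer to be anti-correlated with $\sign(\cP(Q,I))$; running the sampling in $k$ rounds at rate $p^{1/k}$ and waiting for the simulated output to stop changing gives Bob no information distinguishing stably-right from stably-wrong. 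Your own characterization of the balanced case (``the simulated sign on divergence is roughly independent of the true sign and cancels'') is not how the argument of Larsen--Weinstein--Yu works, and the ``main obstacle'' you flag at the end is precisely the missing idea, not a technicality. The quantitative bookkeeping also does not close: with your $k$, the claimed mass $p^{t_q}\cdot 2^{-O(k\log 1/p+\log 1/\beta)}$ does not produce the exponent $\sqrt{t_\total(t_q\log 1/p+\log 1/\beta)}\cdot\log 1/p$, and $t_\total$ never enters your bound through a justified mechanism (the assertion that the number of output changes is bounded by $t_\total$ is neither proved nor used in a way that yields the tradeoff).

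The paper's proof avoids bounding any anti-correlation. Bob's (possibly incorrect) simulation is used only to determine a set of addresses $\csim$; one then defines, for each possible contents $z$ of $\csim$, the posterior-weighted bias $f(z)=\Pr_I[\csim\leftarrow z\mid C_0,C_2]\cdot\E_I[\cP(Q,I)\mid \csim\leftarrow z,C_0,C_2]$. Conditioned on a good pair $(C_0,C_2)$, the event that $C_0$ contains all updated cells probed by the true execution gives a ``peak'': the contents $z^*$ Bob used satisfy $|f(z^*)|\geq \beta p^{4t_q}/4$, while $\sum_z|f(z)|\leq 1$. The Peak-to-Average lemma (Lemma~\ref{lem_peak_to_avg}) then extracts a subset $Y\subseteq\csim$ of size $O\bigl(\sqrt{t_\total(t_q\log 1/p+\log 1/\beta)}\bigr)$ such that knowing the \emph{true} contents of $Y$ alone yields advantage $2^{-O(\sqrt{t_\total(t_q\log 1/p+\log 1/\beta)})}$. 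Crucially, the protocol has a second, public-randomness sample $C_1$ of \emph{all} memory cells (with Alice revealing which sampled cells are updated), so that with probability $p^{|Y|}$ Bob learns the true contents of $Y$ even though he cannot tell updated from pre-initialized cells in general; multiplying the two probabilities gives the stated bound. Your protocol has no analogue of $C_1$, hence no mechanism by which Bob ever learns the true contents of a targeted small set of addresses, and no analogue of the peak-to-average step, which is the component that converts ``correct with probability $p^{O(t_q)}$'' into a nontrivial bias without ever needing to detect correctness. (The address-encoding concern you raise is a non-issue: the paper simply pays $2w$ bits per sent cell, which is absorbed into $(8pS+S_\cache)\cdot w$.)
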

As the proof is heavily inspired by previous work, we defer the proof to Section~\ref{sec:defer}.

\section{Data Structure Problems}
With the necessary framework for obtaining a one-way protocol from a static data structure with pre-initialized memory and a cache, we are now ready to focus on the concrete problem of Dynamic Reachability. In our proof, we consider Reachability queries in a variant of the so-called Butterfly graph. This graph was also used by P{\v a}tra\c{s}cu~\cite{Pat11} in his seminal work that established lower bounds for a host of static data structure problems. However, when constructing a dynamic problem involving multiple Butterfly graphs, we found it more convenient to define an intermediate data structure problem that we name $0$-XOR in Multiple Butterflies. This problem is easier to prove one-way communication lower bounds for. We then show via a reduction that a data structure for Dynamic Reachability may be used to solve $0$-XOR in Multiple Butterflies. We define this new data structure problem in the following.

\paragraph{Butterfly Graphs.}
For a degree $B$ and depth $d$, a Butterfly graph with degree $B$ and depth $d$ has $d+1$ layers of $B^d$ nodes. The nodes at layer $0$ are the sources and the nodes in layer $d+1$ are the sinks. All nodes, except the sinks, have degree $B$. 

To describe the edges, we index the nodes in every layer by consecutive integers in $[B^d]$. For nodes in layer $i$ for $0 \leq i \leq d$, we have $B$ outgoing edges. The edges leaving a node $v^i_j$ go to the $B$ nodes $v^{i+1}_k$ in layer $i+1$ for which the base-$B$ representation of $j$ and $k$ are equal in all digits $c \neq i$ (they may be equal or not in digit $i$). The least significant digit is digit number $0$ and so forth.

Observe that there is a unique path between any source-sink pair $(s,t)$. This path is obtained by writing $s$ and $t$ in base $B$. Then, for each layer $i=0,\dots,d$, starting at the node $s$ in layer $0$, we think of changing digit $i$ of $s$ to the $i$'th digit of $t$. This means that if we are at some node $v^i_j$ in layer $i$, we go to the node $v^{i+1}_k$ where $j$ and $k$ are equal in all digits except possibly the $i$'th in which the digit of $k$ is equal to that of $t$. See Figure~\ref{fig:butterflyQuery} for an example of a Butterfly of degree $2$ and depth $3$.

\paragraph{$0$-XOR in One Butterfly.}
This is a static data structure problem in which the input is a Butterfly of degree $B$ and depth $d$ with an assignment of a $b$-bit string to every edge of the Butterfly. A query is specified by a source-sink pair $(s,t)$ and the answer to the query is $1$ if the XOR of bit strings along the path from $s$ to $t$ is the all-zero bit string and the answer is $0$ otherwise. See Figure~\ref{fig:butterflyQuery} for an illustration.
\begin{figure}[h]
\centering
\includegraphics[width=5cm, keepaspectratio]{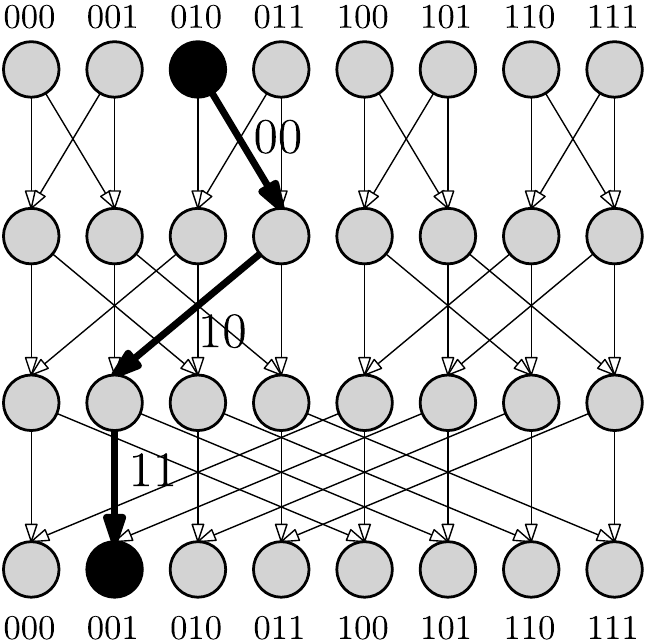}
\caption{A Butterfly with degree $2$ and depth $3$ with $b$-bit strings on the edges for $b=2$. Given the query $(2,1)$, we interpret $2$ as the index of a source and $1$ as the index of a sink. The query $(2,1)$ thus asks whether the XOR of bit strings along the path from the source indexed by $2 = 010$ to the sink indexed by $1 = 001$ is the all-zero bit string. In this example, the XOR is $00 \oplus 10 \oplus 11 = 01$, i.e. not the all-zero bit string. Thus the answer to the query is $0$. For clarity, we have only shown the bit strings on edges along the queried path.}
\label{fig:butterflyQuery}
\end{figure}

\paragraph{$0$-XOR in Multiple Butterflies.}
This is a dynamic data structure problem. For a depth $d$ and degree $B$, we must maintain multiple Butterfly graphs $G_d,\dots,G_1$ all of degree $B$ and where the depth of $G_i$ is $i$. 

Updates in this problem arrive in \emph{epochs}. The first epoch has number $d$, then comes epoch $d-1$ and so forth, until epoch $1$. The updates of epoch $i$ assign a $b$-bit string to each edge of the Butterfly $G_i$.
\begin{figure}[h]
\centering
\includegraphics[width=10cm, keepaspectratio]{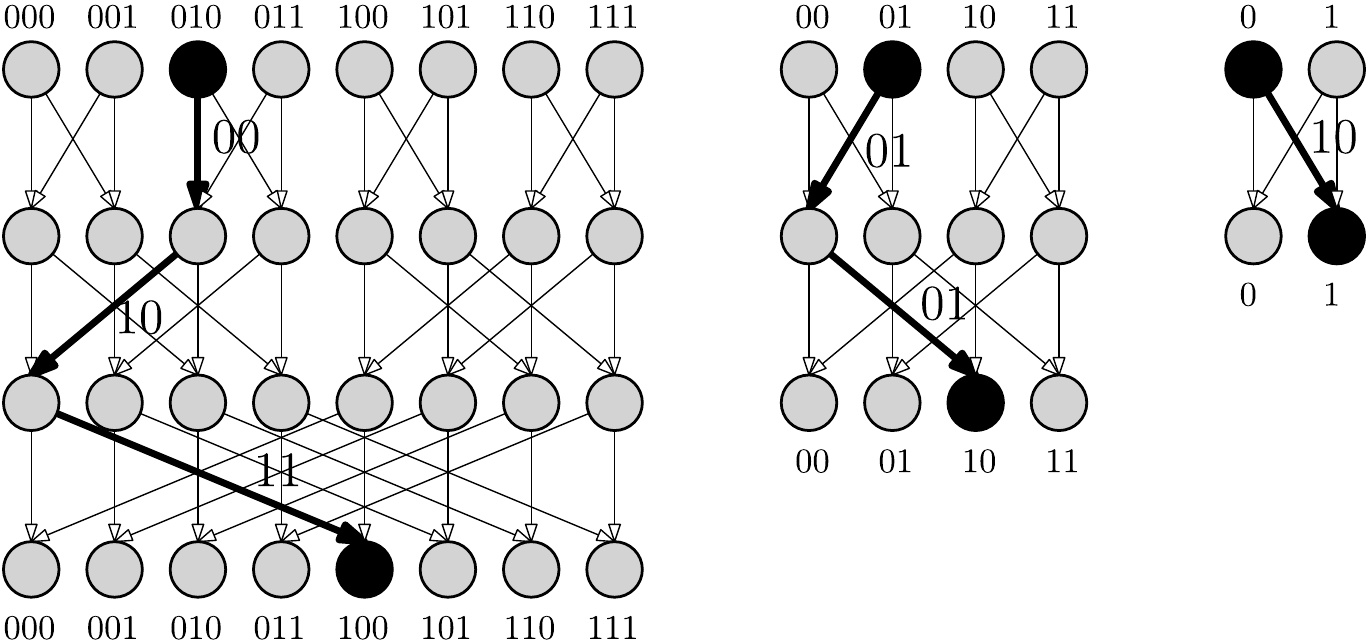}
\caption{Consider $0$-XOR in the three Butterfly graphs of degree $2$ and depths $3,2$ and $1$ respectively, with $b$-bit strings for $b=2$. Given the query $(2,4)$, we interpret $\lfloor 2/2^0 \rfloor = 2 = 010$ as the index of a source in the first graph, $\lfloor 2/2^1 \rfloor = 1 = 01$ as the source index in the second and $\lfloor 2/2^2 \rfloor = 0$ as the index of the source in the third and smallest graph. In the first graph, the sink is $\lfloor 4 /2^0 \rfloor = 4 = 100$, in the second graph, the sink is $\lfloor 4 / 2^1 \rfloor = 2 = 10$ and in the third graph, the sink is $\lfloor 4/2^2 \rfloor = 1$. The query $(2,4)$ thus asks whether the XOR of bit strings along at least one of the three bold paths is the all-zero bit string. This is the case in the second graph where $01 \oplus 01 = 00$ and thus the answer to the query is $1$.}
\label{fig:butterflyMany}
\end{figure}

After the updates have been processed, we must answer queries. A query is specified by two integers $(s,t) \in [B^d] \times [B^d]$. In each Butterfly $G_i$, the pair $(s,t)$ specifies the source-sink pair $(s_i,t_i)$ such that $s_i$ is the source of index $\lfloor s/B^{d-i} \rfloor$ in $G_i$ and $t_i$ is the sink of index $\lfloor t/B^{d-i} \rfloor$ in $G_i$. The answer to a query $(s,t)$ is $1$ if there is \emph{at least one} $G_i$ for which the query $(s_i,t_i)$ has the answer $1$ for $0$-XOR in one Butterfly. Otherwise the answer is $0$. See Figure~\ref{fig:butterflyMany} for an illustration.

\section{$0$-XOR in Multiple Butterflies to Dynamic Reachability}
In this section, we give a reduction from  $0$-XOR in Multiple Butterflies, to Dynamic Reachability (in directed graphs). We perform the reduction in two steps to ease the presentation.

\paragraph{Reduction for One Butterfly.}
Consider first the (static) problem of $0$-XOR in One Butterfly. Recall that in this problem, we are given a Butterfly graph $G$ of degree $B$ and depth $d$ with a $b$-bit string assigned to each edge. For a query $(s,t)$, we must return whether the XOR of the bit strings along the unique $s$-$t$ path is the all-zero bit string $\bar{0}$ or not. 

From $G$, we construct a new graph $G'$ for Reachability queries. The intention is that every $0$-XOR  query in $G$ can be answered by one Reachability query in $G'$. For every node $u$ in $G$, we construct $2^b$ nodes in $G'$. We think of these nodes as representing each of the $2^b$ possible $b$-bit strings. A node $u \in G$ is thus replaced by nodes $\{u^\sigma\}_{\sigma \in \{0,1\}^b}$ in $G'$. For an edge from a node $u$ to a node $v$ in $G$ with bit string $\sigma \in \{0,1\}^b$, we insert $2^b$ edges in $G'$. There is one such edge leaving every node $u^\tau$. The edge leaving $u^\tau$ enters $v^{\tau \oplus \sigma}$, where $\oplus$ denotes bitwise XOR. The key observation is that for any path from a source $s$ to a sink $t$ in $G$, if we start in $s^{\bar{0}}$ in $G'$ and descend along the same path, always following the unique outgoing edge from a current node $u^\tau$ to a node $v^{\tau \oplus \sigma}$ along the path, then there is exactly one reachable sink $t^\sigma$. Moreover, $\sigma$ is equal to the XOR of the bit strings assigned to edges along the $s$-$t$ path in $G$. Thus the answer to a $0$-XOR query $(s,t)$ in $G$ is $1$ if and only if source $s^{\bar{0}}$ can reach sink $t^{\bar{0}}$ in $G'$. See Figure~\ref{fig:butterflyXORtoReach} for an illustration.
\begin{figure}[h]
\centering
\includegraphics[width=7cm, keepaspectratio]{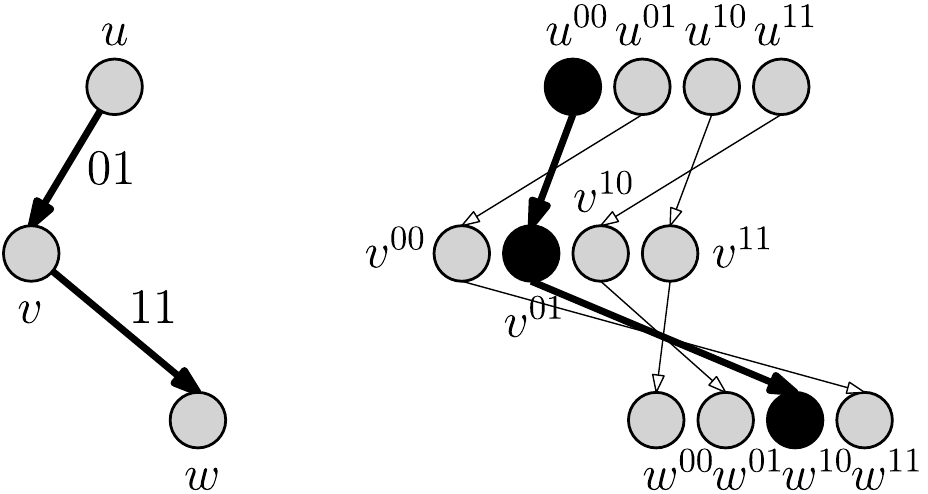}
\caption{On the left, we have three nodes in a Butterfly $G$ with $b$-bit strings on the edges for $b=2$. To transform it into a reachability instance $G'$, each node is represented by $2^b = 4$ nodes, one for each $b$-bit string. For the edge from $u$ to $v$ with bit string $01$, we add an edge from $u^\sigma$ to $v^{\sigma \oplus 01}$ for every $\sigma \in \{0,1\}^2$. The node $u^{\bar{0}} = u^{00}$ can reach precisely the node $w^{10}$ corresponding to the XOR $01 \oplus 11 = 10$.}
\label{fig:butterflyXORtoReach}
\end{figure}

\paragraph{Full Reduction.}
We now use the reduction above for one Butterfly to reduce $0$-XOR in Multiple Butterflies to Dynamic Reachability. In our Dynamic Reachability problem, the graph has $n=\sum_{i=1}^d 2^b (i+1)B^i  + 2\sum_{i=0}^d B^i$ nodes. There are $2^b (i+1) B^i$ nodes corresponding to the Butterfly $G_i$. These nodes correspond to the nodes in $G'_i$ created in the reduction from $0$-XOR in One Butterfly of degree $B$ and depth $i$ to Reachability. The remaining $2 \sum_{i=0}^d B^i$ nodes correspond to two perfect $B$-ary trees $\cS$ and $\cT$ with $B^d$ leaves. Initially, there are no edges in the graph.

Recall that in $0$-XOR in Multiple Butterflies, the updates arrive in epochs, where the updates of epoch $i$ assign bit strings to the edges of $G_i$, where $G_i$ has degree $B$ and depth $i$.

In epoch $d$ (the first to be processed), we start by inserting edges to construct the two perfect $B$-ary trees $\cS$ and $\cT$ with $B^d$ leaves each. We think of the leaves of $\cS$ as representing the sources in $G_d$. The nodes just above represent the sources in $G_{d-1}$ and so forth. Similarly with $\cT$, the leaves represent sinks of $G_d$ and so on. The tree $\cS$ has its edges pointing from children to parents, whereas $\cT$ has its edges pointing from parents to children. For the nodes at depth $i$ in $\cS$ (leaves have depth $d$) we add an edge from the node representing source $s$ in $G_i$ to the corresponding source $s^{\bar{0}}$ in $G'_i$. For the nodes $t$ at depth $i$ in $\cT$, we instead add an edge from the sink $t^{\bar{0}}$ in $G'_i$ to $t$.

To handle the updates of epoch $i$, we simply insert the $2^biB^i$ edges into $G'_i$ corresponding to the reduction shown above from $0$-XOR in One Butterfly to Reachability. Finally, to answer a query $(s,t) \in [B^d] \times [B^d]$ after having processed all epochs of updates, we simply ask whether the leaf corresponding to $s$ in $\cS$ can reach the leaf corresponding to $t$ in $\cT$. To see that this correctly answers the query, observe that the leaf corresponding to $s$ can only leave $\cS$ through one of the ancestors of $s$. If it leaves the ancestor at depth $i$, then it enters the source $s^{\bar{0}}$ indexed $\lfloor s/B^{d-i} \rfloor$ in $G'_i$. To leave $G'_i$, it has to reach a sink $t^{\bar{0}}$. If the sink it reaches is not indexed $\lfloor t/B^{d-i} \rfloor$, then the edge leaving to $\cT$ enters a node $u$ where $t$ is not in the subtree rooted at $u$ and hence cannot reach $t$. On the other hand, if it reaches the sink $t^{\bar{0}}$ indexed $\lfloor t/B^{d-i} \rfloor$, then the leaving edge enters a node $u$ in $\cT$ where $t$ is in the subtree and thus $t$ is reachable. It follows that $s$ can reach $t$ if and only if there is at least one $G'_i$ in which the source $\lfloor s/B^{d-i} \rfloor$ can reach the sink $\lfloor t/B^{d-i} \rfloor$. This concludes the reduction. See Figure~\ref{fig:butterflyReduction} for an illustration. We thus have
\begin{figure}[h]
\centering
\includegraphics[width=10cm, keepaspectratio]{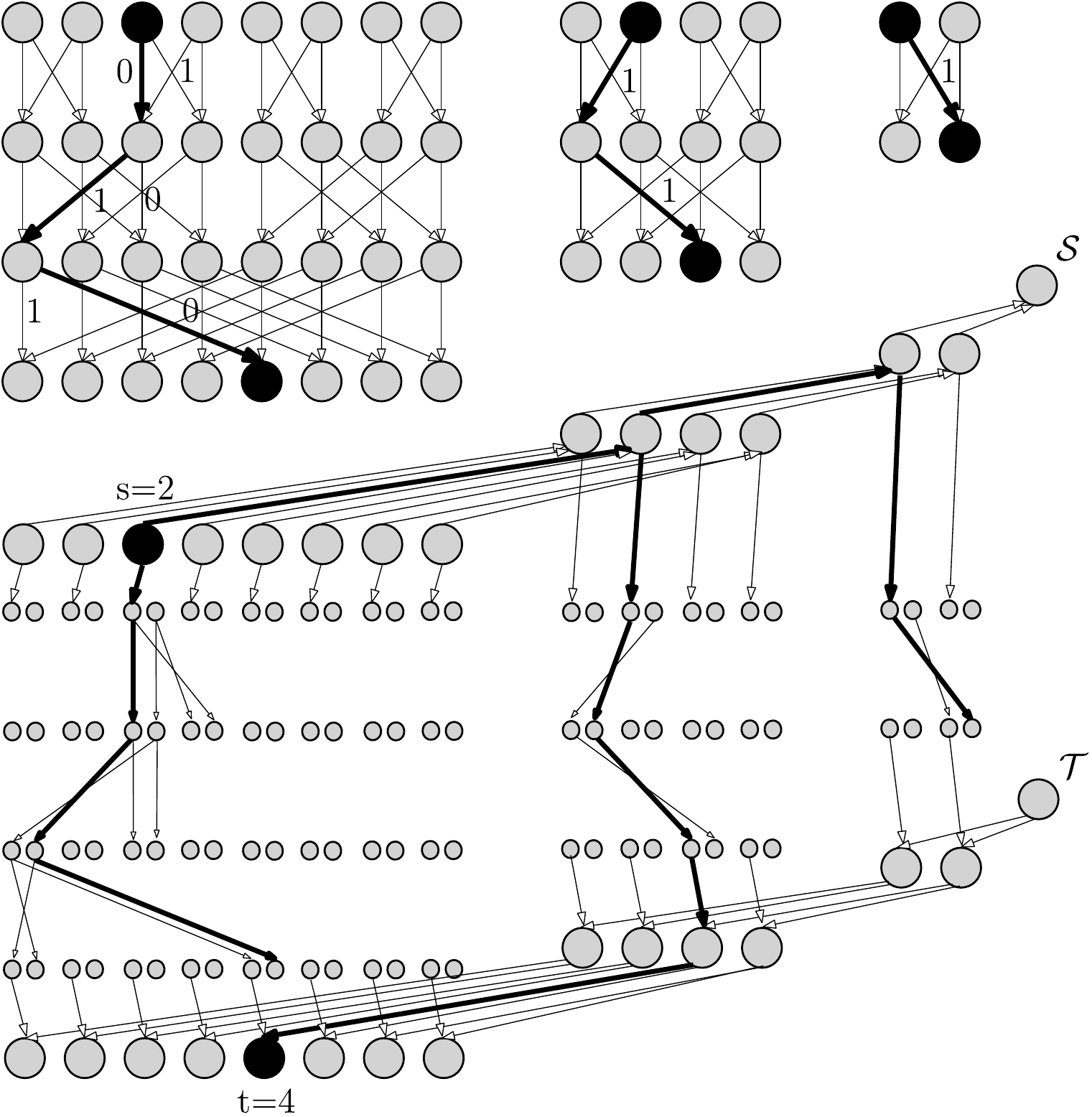}
\caption{Full reduction. The query $(s,t)=(2,4)$ on three Butterfly graphs of degree $2$ and depth $3,2,1$ with $b$-bit strings for $b=1$. For clarity, most edges in the Reachability instance have been hidden. In this example, the answer to the query is $1$. The leaf in $\cS$  corresponding to $s$ can reach the leaf in $\cT$ corresponding to $t$ by using the Butterfly graph of depth $2$ in which the XOR along the path from source $\lfloor s/2 \rfloor$ to sink $\lfloor t/2 \rfloor$ is the all-zero bit string.}
\label{fig:butterflyReduction}
\end{figure}

\begin{theorem}
\label{thm:reduction}
If there is a data structure for Dynamic Reachability in $n$-node graphs, with worst case update time $t_u$ and expected query time $t_q$, then for any $d$, $B$ and $b$ such that $n \geq \sum_{i=1}^d 2^b (i+1)B^i  + 2\sum_{i=0}^d B^i$, there is a data structure for $0$-XOR in Multiple Butterflies of degree $B$ and depth up to $d$ with $b$-bit strings, that probes $O(t_u n/B^{d-i-1})$ cells when processing the updates of epoch $i$, and that answers the query in expected $t_q$ probes.
\end{theorem}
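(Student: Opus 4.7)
The construction of the Dynamic Reachability graph $G$ is already given above, so the proof amounts to verification. My plan is to (i) confirm that a single Reachability query in $G$ returns the combined $0$-XOR in Multiple Butterflies answer, (ii) count the edge insertions required in each epoch and multiply by $t_u$, and (iii) observe that the combined query maps to a single Reachability query of expected cost $t_q$.

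For correctness, since $\cS$ is oriented child-to-parent and $\cT$ parent-to-child, every directed path from the $s$-leaf of $\cS$ to the $t$-leaf of $\cT$ must climb in $\cS$, exit at the ancestor of $s$ at some depth $i$ directly into the source $s_i^{\bar 0}$ of $G'_i$ with $s_i=\lfloor s/B^{d-i}\rfloor$, traverse $G'_i$ to some sink $t^\ast$, enter $\cT$ at the ancestor whose subtree of leaves is precisely the set of indices $t'$ with $\lfloor t'/B^{d-i}\rfloor$ matching $t^\ast$'s source-index, and descend to a leaf. Thus the descent ends at the $t$-leaf if and only if $t^\ast=t_i^{\bar 0}$ for $t_i=\lfloor t/B^{d-i}\rfloor$. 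Combining this with the correctness of the one-Butterfly reduction (which gives $s_i^{\bar 0}\to t_i^{\bar 0}$ in $G'_i$ iff the XOR on the $s_i$-$t_i$ path in $G_i$ is $\bar 0$) and taking the disjunction over $i$ reproduces the desired query answer.

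For the update count, $G_i$ has $i\cdot B^{i+1}$ edges so $G'_i$ has $O(2^b i B^{i+1})$ edges, each inserted once in epoch $i$ at cost at most $t_u$ probes. In epoch $d$ I additionally insert the $O(B^d)$ edges of $\cS, \cT$ together with the $O(B^d)$ bridging edges between those trees and the sources/sinks of the various $G'_j$, all of which are lower order. Using $n=\Theta(2^b d B^d)$ I have $2^b i B^{i+1}\le 2^b d B^{i+1} = O(n/B^{d-i-1})$, matching the claimed epoch budget $O(t_u n/B^{d-i-1})$. The combined query is answered by a single Reachability query on the $(s,t)$-leaves of $\cS$ and $\cT$, costing $t_q$ in expectation. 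I expect no real obstacle: the only non-mechanical step is the correctness argument, whose content is just that the orientations of $\cS$ and $\cT$ force every witness path to commit to exactly one $G'_i$ and to enter and leave it through the prescribed source-sink pair; the rest is edge-counting plus the formula for $n$.
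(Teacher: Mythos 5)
Your verification is correct and follows essentially the same route as the paper: the paper's own argument for Theorem~\ref{thm:reduction} is exactly the construction plus the observation that the orientations of $\cS$ and $\cT$ force any witness path to commit to a single $G'_i$, entering at $s_i^{\bar 0}$ and leaving through $t_i^{\bar 0}$, together with the per-epoch edge count $2^b i B^{i+1}$ (plus the lower-order tree and bridging edges in epoch $d$) multiplied by $t_u$ and compared against $n \geq 2^b(d+1)B^d$. Only cosmetic slips (e.g., ``$t^\ast$'s source-index'' should be sink-index, and only the lower bound $n = \Omega(2^b d B^d)$ is needed rather than a $\Theta$) distinguish your write-up from the paper's.
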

In light of this, we prove a lower bound for $0$-XOR in Multiple Butterflies:
\begin{theorem}
\label{thm:xorlb}
Consider $0$-XOR in Multiple Butterflies of degree $B$ and depth up to $d$ with $(\lg_2  d + 6)$-bit strings, for $d$ at least a sufficiently large constant. If $B = (dt_u w)^{12}$, then any data structure updating $O(t_u d B^{i+1})$ cells in epoch $i$ must have an expected query time $t_q$ satisfying
\[
t_q = \Omega\left(\frac{\lg^{3/2} n}{\lg^{2}(t_u w)}\right)
\]
where $n = \sum_{i=1}^d dB^i$ is the total number of nodes in all Butterfly graphs.
\end{theorem}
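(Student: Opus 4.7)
The plan is to combine the chronogram reduction with Theorem~\ref{thm:simulate} and a dedicated one-way communication lower bound for $0$-XOR in One Butterfly. The immediate obstacle is that $0$-XOR in Multiple Butterflies is not decomposable: the answer is an OR over epochs, not a sum. I overcome this using the biased input distribution hinted at in Section~\ref{sec:newtech}: assign every edge of every epoch an independently uniform $b$-bit label with $b=\lg d+6$. Then for any fixed query $(s,t)$ each individual epoch answers $1$ with probability $2^{-b}=1/(64d)$, so by a union bound the event $E_{-i}$ that no epoch $j\neq i$ answers $1$ holds with probability $\geq 1-1/64$; conditional on $E_{-i}$ the overall dynamic query answer equals the $0$-XOR answer on epoch $i$ alone. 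The biased distribution thus makes the problem ``decomposable with respect to epoch $i$'' on a constant-measure event, recovering the chronogram setting.

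Next, standard chronogram averaging over target epochs $i\in[d/2,d]$ picks one for which the expected number of probes the query algorithm makes into cells last written in epoch $i$ is $O(t_q/d)$. Freezing the hardcoded updates of all epochs $j\neq i$ to a typical realization lying in $E_{-i}$ yields a static data structure with pre-initialized memory and a cache for $0$-XOR in One Butterfly on $G_i$, with $S=O(t_u d B^{i+1})$ updated cells, $S_{\cache}=\sum_{j<i}O(t_u d B^{j+1})=O(t_u d B^i)$ cache cells, expected total probes $O(t_q)$, and expected probes into updated cells $O(t_q/d)$. The choice $B=(dt_uw)^{12}$ and $b=\lg d+6$ ensures that $S_\cache\cdot w$ and any $pS\cdot w$ for $p$ below a suitable threshold stay well below the input entropy $\Theta(db B^{i+1})$. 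Define the weighted problem $\cP_i(Q,I)=1$ if the path XOR on $G_i$ is $\bar 0$ and $\cP_i(Q,I)=-1/(2^b-1)$ otherwise, so that $\E_I[\cP_i]=0$ and $|\cP_i|\geq \beta:=1/(2^b-1)=\Omega(1/d)$. Applying Theorem~\ref{thm:simulate} converts the static structure into a one-way protocol for $G_{\cP_i}$ of message length $(8pS+S_\cache)w$ and advantage at least
\[
2^{-O\bigl(\sqrt{t_q(t_q\log(1/p)/d+\log d)}\,\log(1/p)\bigr)} \;-\; 2^{-\Omega(pS)}.
\]

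The main technical obstacle is proving a matching one-way communication upper bound for $G_{\cP_i}$: for $i=\Theta(d)$ and any $C$-bit protocol with $C$ a sufficiently small fraction of the input entropy $ibB^{i+1}$, the weighted advantage must be at most $2^{-\Omega(d)}$. I plan to establish this by exploiting the linear-algebraic structure of source-sink path indicators in the Butterfly graph (as in P{\v a}tra\c{s}cu's static lower bounds~\cite{Pat11}) together with a Fourier-analytic argument in the spirit of Larsen--Weinstein--Yu~\cite{LWY18} for 2D range parity. Concretely, for each of the $b$ bit positions the path indicator vectors span a large subspace of $\mathbb{F}_2^{iB^{i+1}}$, and Parseval restricted to this subspace controls, for any message fiber $\cI_M$, the average bias $|\Pr[\text{path bit }k=0\mid I\in\cI_M]-\tfrac12|$ over random source-sink paths; a direct-sum amplification across the $b$ bit positions then converts the per-bit biases into the required upper bound on the weighted advantage of $\cP_i$. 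Making this Fourier argument robust to the fact that the target bias $2^{-b}=1/(64d)$ is much larger than the trivial bound one would get from random guessing, while simultaneously accommodating the $S_\cache\cdot w$ ``free'' bits of the cache, is the hardest point.

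Finally, combining the simulation lower bound with the communication upper bound and carefully choosing $p$ so as to balance the two terms in Theorem~\ref{thm:simulate} while keeping $2^{-\Omega(pS)}$ negligible, standard rearrangement together with the identity $\log n=\Theta(d\log(t_uw))$ (which follows from $n=\Theta(dB^d)$ and $B=(dt_uw)^{12}$) yields the claimed $t_q=\Omega(\log^{3/2}n/\log^{2}(t_uw))$.
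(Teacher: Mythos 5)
There is a genuine gap, and it sits exactly at the step you flag as ``the main technical obstacle'': the single-query communication lower bound you plan to prove is false. After the chronogram step your static problem is $0$-XOR in One Butterfly on $n=B^i$ sources/sinks, and the message in Theorem~\ref{thm:simulate} has length $(8pS+S_\cache)w$; already the $8pSw$ term is $\Theta(n)$ bits for the natural choice $p\approx n/(Sw)$, and the cache contributes $S_\cache w=\Theta(t_u d n w)\gg n$ bits. But there is a cheap protocol with far higher advantage than $2^{-\Omega(d)}$: Alice sends the labels of all edges lying on paths between the first $n/\lg^{\Theta(1)}n$ sources and sinks (only $n/\lg^{\Theta(1)}n$ edges because these paths overlap heavily in the Butterfly), and Bob answers exactly whenever both endpoints of his query fall in this set, which happens with probability $1/\lg^{\Theta(1)}n$. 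The weighted advantage for your $\cP_i$ is then $\Omega\bigl(1/(d\lg^{\Theta(1)}n)\bigr)$, which in the relevant regime (e.g.\ $t_uw=\mathrm{polylog}\,n$, so $d=\Theta(\lg n/\lg\lg n)$) dwarfs $2^{-\Omega(d)}$. No Fourier/rank or direct-sum argument can rescue a per-query bound of $2^{-\Omega(d)}$, because this protocol is an outright counterexample; exponentially small advantage per query is simply unattainable once the message may describe a polylogarithmic fraction of the queries exactly. The paper's proof avoids this by never applying the simulation theorem to single queries: it introduces \emph{meta-queries}, sets $T$ of $k=n/\lg_2 n$ queries whose paths are node-disjoint, with answer the XOR of the individual answers and product weights $\cP(T,I)=\prod_{(s,t)\in T}\phi_I(s,t)$, so $\beta=2^{-kb}$ and the simulation is invoked with costs $kt_\total$, $kt_q$; the matching communication bound (advantage $\le\exp(-ck\lg n)$ unless the message has $\ge cnB^{1/6}$ bits) is then proved by an $r$-th moment counting argument ($r=B^{1/6}$), where the node-disjointness inside each meta-query is what lets one bound the number of edge-cancelling tuples $\Gamma$.

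A secondary, fixable issue: you cannot ``freeze the hardcoded updates of all epochs $j\neq i$ to a typical realization lying in $E_{-i}$,'' since the blocking event depends on the query as well as on $\bU_{\neq i}$. Any single fixing leaves a constant fraction of queries blocked, so the static data structure you obtain is only guaranteed correct on a subset $\cQ$ of roughly $n^2/4$ query pairs (chosen, as in the paper, by averaging simultaneously over blocking probability and the two probe counts). Your downstream communication argument must therefore be robust to correctness on a dense subset only; in the paper this is precisely why Lemma~\ref{lem:largeM} shows the family of node-disjoint meta-queries inside such a $\cQ$ is still of size $n^{k}/9$. Your overall architecture (biased $b=\lg_2 d+6$ labels to defeat non-decomposability, chronogram averaging, Theorem~\ref{thm:simulate}, then a communication bound) matches the paper, but without the meta-query device the quantitative chain does not close.
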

Let us combine Theorem~\ref{thm:reduction} and Theorem~\ref{thm:xorlb} to obtain our lower bound for Dynamic Reachability. 

For any sufficiently large $n$, assume we have a data structure for Dynamic Reachability in $n$-node graphs with worst case update time $t_u$ and expected query time $t_q$. Pick $d$ as the largest integer such that for $B = (d t_u w)^{12}$, we have $\sum_{i=1}^d 2^6 d (i+1) B^i + 2 \sum_{i=0}^d B^i \leq n$. Observe that increasing $d$ by one increases the sum by a factor $(t_u w)^{O(1)}$ when $w$ satisfies $w = \Omega(\lg n) = \Omega(d)$. Thus for the chosen $d$, we get $\sum_{i=1}^d 2^6 d (i+1) B^i + 2 \sum_{i=0}^d B^i  \geq n/(t_u w)^{O(1)}$ implying $B^d \geq n/(t_u w)^{O(1)}$.

From Theorem~\ref{thm:reduction} with $b = (\lg_2 d + 6)$, we now obtain a data structure for $0$-XOR in Multiple Butterflies of degree $B = (d t_u w)^{12}$ and depth up to $d$ that probes 
\[
O(t_u n/B^{d-i-1}) = t_u B^d (t_u w)^{O(1)} /B^{d-i-1} = B^{i+1} (t_u w)^{O(1)}
\]
cells when processing the updates of epoch $i$. Theorem~\ref{thm:xorlb}, with $t_u' = (t_u w)^{O(1)}$, finally implies that
\[
t_q = \Omega\left(\frac{\lg^{3/2} n}{\lg^2(t_u w)} \right).
\]
This proves our main result, Theorem~\ref{thm:main}.

\section{Lower Bound for $0$-XOR in Multiple Butterflies}
In this section, we prove our lower bound for $0$-XOR in Multiple Butterflies stated in Theorem~\ref{thm:xorlb}.

The first step of the proof is to reduce $0$-XOR in One Butterfly to $0$-XOR in Multiple Butterflies. Concretely, we show that an efficient (dynamic) data structure for $0$-XOR in Multiple Butterflies gives an (even more) efficient (static) data structure for $0$-XOR in One Butterfly with pre-initialized memory and a cache. The exact statement is as follows
\begin{lemma}
\label{lem:dyntostatic}
If there is a dynamic data structure for $0$-XOR in Multiple Butterflies of degree $B$ and depth up to $d$ with $b$-bit strings for $b \geq \lg_2 d + 6$, with worst case update time $t_u$, and that answers a query in expected $t_q$ probes, then there is an $i \in \{d/2+1,\dots,d\}$, a set of queries $\cQ \subseteq [B^i] \times [B^i]$ with $|Q| \geq B^{2i}/4$ and a deterministic static data structure for $0$-XOR in One Butterfly of degree $B$ and depth $i$, with pre-initialized memory that updates $S=O(t_u d B^{i+1})$ cells and uses a cache of $S_\cache = O(t_u d B^i)$ cells, such that on a uniform random assignment of $b$-bit strings to the edges of the input Butterfly, it holds that
\begin{itemize}
    \item For all queries $q \in Q$, the data structure answers $q$ correctly and the expected number of probes (over the random choice of $b$-bit strings) into updated memory cells is $O(t_q/d)$ and the expected total number of probes $t_\total$ is at most $O(t_q)$.
\end{itemize}
\end{lemma}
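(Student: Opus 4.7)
The plan is to apply the chronogram reduction, with an extra step to handle the non-decomposable OR structure of the multiple-Butterflies query. I will consider the input distribution in which every edge of every Butterfly $G_j$ receives an independent uniform $b$-bit string, and let the dynamic query $q=(s,t)\in[B^d]^2$ be uniform. For the given dynamic data structure, let $C_j$ denote the set of memory cells whose most recent update (after all epochs are processed) occurred during epoch $j$. The sets $\{C_j\}$ are disjoint, and since epoch $j$ performs $O(jB^{j+1})$ edge updates of worst-case cost $t_u$, $|C_j|\leq O(dB^{j+1}t_u)$. Writing the expected total query time as $\sum_j\E[\text{probes to }C_j]\leq t_q$ and averaging over the $d/2$ epochs in $\{d/2+1,\ldots,d\}$ yields some $i$ in this range with $\E[\text{probes to }C_i]\leq 2t_q/d$, where the expectation is over random inputs, the random query, and any internal randomness $r$ of the dynamic data structure.

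Next I fix this $i$ and derandomize. Let $y$ denote the bit strings on $G_j$ for $j\neq i$, and let $x$ denote the bit strings on $G_i$, which will serve as the input to the static problem. Define, as functions of random $(y,r)$,
\[
P(y,r)=\E_{q,x}[\text{probes to }C_i],\qquad T(y,r)=\E_{q,x}[\text{total probes}],\qquad G(y)=\Pr_q[\text{no }G_j\text{ with }j\neq i\text{ has zero XOR on the pair induced by }q].
\]
Then $\E[P]\leq 2t_q/d$, $\E[T]\leq t_q$, and $\E[1-G(y)]\leq d\cdot 2^{-b}\leq 1/64$, where the last inequality uses that for uniform edge labels the XOR along any fixed source--sink path is uniform on $\{0,1\}^b$ and that $b\geq\log d+6$. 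By Markov and a union bound over these three events, some fixed $(y^*,r^*)$ simultaneously achieves $P\leq 16t_q/d$, $T\leq 8t_q$, and $G\geq 15/16$.

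With $(y^*,r^*)$ fixed, I build the static data structure for $0$-XOR in One Butterfly of depth $i$ as follows. The memory is pre-initialized by simulating the dynamic updates of epochs $j>i$; upon receiving $x$, the preprocessing runs the epoch-$i$ updates, touching at most $O(dB^{i+1}t_u)$ cells (these are the updated cells $S$), and then simulates the hard-coded epochs $j<i$, placing every cell modified in this phase (with its final contents) into the cache ($S_{\cache}=O(dB^it_u)$). A static query $(s_i,t_i)\in[B^i]^2$ is mapped to a fixed preimage $q(s_i,t_i)\in[B^d]^2$ whose projection to $G_i$ equals $(s_i,t_i)$, and the dynamic query algorithm is simulated on $q(s_i,t_i)$, probing the cache for cells last written in $j<i$ and memory otherwise. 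To choose $Q$, define the sets $Q_{\good},Q_{\mathrm{low}},Q_{\total}\subseteq[B^d]^2$ consisting of queries for which (respectively) no off-epoch zero XOR occurs under $y^*$, $\E_x[\text{probes to }C_i]\leq 64t_q/d$, and $\E_x[\text{total probes}]\leq 32t_q$. Markov on the definitions of $P,T,G$ yields densities $\geq 15/16,\,3/4,\,3/4$, so their intersection has density $\geq 7/16$ in $[B^d]^2$. Since the projection $(s,t)\mapsto(s_i,t_i)$ has exactly $B^{2(d-i)}$ preimages per image, pigeonholing yields at least $(7/16)B^{2i}\geq B^{2i}/4$ elements of $[B^i]^2$ with at least one preimage in the intersection; take $Q$ to be this set and let $q(\cdot)$ pick such a preimage for each element. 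For $(s_i,t_i)\in Q$ the dynamic answer collapses to the $G_i$-only XOR answer, since the off-epochs contribute $0$ to the OR, and the required expected probe bounds follow immediately from the definitions of $Q_{\mathrm{low}}$ and $Q_{\total}$.

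The hard part is exactly the non-decomposability of the OR aggregation: unlike the sum-based 2d range parity used by Larsen--Weinstein--Yu, the contributions of off-epochs cannot be algebraically removed from the dynamic answer. My resolution is to exploit the bit-string length $b\geq\log d+6$, which forces each off-epoch to produce a zero XOR on a fixed source--sink pair with probability only $2^{-b}\leq 1/(64d)$; a union bound over the $d-1$ off-epochs then keeps the dynamic OR aligned with the $G_i$-only answer on a constant fraction of queries, which is what allows the restriction to $Q_{\good}$. Beyond this new ingredient, the remainder of the argument, namely choosing the epoch by averaging, derandomizing $(y,r)$ by Markov, and pigeonholing from $[B^d]^2$ down to $[B^i]^2$, follows the standard chronogram template.
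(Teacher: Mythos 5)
Your proposal is correct and follows essentially the same route as the paper: choose the epoch $i$ by averaging, fix the off-epoch updates (and internal randomness) so that blocking is rare (using $b\geq\lg_2 d+6$) while preserving the average probe counts, and then pass to a dense set of projected queries with a good preimage, building the static structure with epochs $>i$ as pre-initialized memory and epochs $<i$ as cache. The only difference is cosmetic: you select the good query set by a direct Markov-plus-pigeonhole density count, whereas the paper phrases the same counting as a contradiction argument over ``valid'' queries; the constants and conclusions match.
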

The last step is then to prove a lower bound for $0$-XOR in One Butterfly for static data structures with pre-initialized memory and a cache. Here we prove the following
\begin{lemma}
\label{lem:lbstatic}
Consider $0$-XOR in One Butterfly of degree $B$, depth $d$ with $b$-bit strings, for any $B \geq d^7$ and $d$ at least a sufficiently large constant. Assume there is a subset of queries $\cQ \subseteq [B^d] \times [B^d] = [n] \times [n]$ with $|\cQ| \geq n^2/4$ and a deterministic data structure with pre-initialized memory, updating at most $S$ cells, with a cache of size $S_\cache$ making an expected total $t_\total$ probes on any query $(s,t) \in \cQ$ and making at most $t_q$ probes to updated cells in expectation on any query $(s,t) \in \cQ$. Here both expectations are over a uniform random input. If $S_\cache = o(nB^{1/6}/w)$, then it must be the case that
\[
t_\total(t_q \lg(Sw/n) + b) = \Omega((\lg n/\lg(Sw/n))^2).
\]
\end{lemma}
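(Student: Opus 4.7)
My plan is to apply the simulation theorem (Theorem~\ref{thm:simulate}) to convert the hypothesized data structure into a one-way communication protocol for $0$-XOR in One Butterfly, and then contradict a communication lower bound for that problem. I would cast $0$-XOR in One Butterfly as a weighted problem $\cP$ by setting $\cP(Q, I) = 1 - 2^{-b}$ when the XOR along the unique $s$-$t$ path equals the all-zero string $0^b$, and $\cP(Q, I) = -2^{-b}$ otherwise. Under the uniform random assignment of $b$-bit strings to edges, the XOR along any source-sink path is uniform on $\{0,1\}^b$, hence $\E_I[\cP(Q, I)] = 0$ and $|\cP(Q, I)| \geq \beta := 2^{-b}$, and the sign of $\cP$ matches the $0$-XOR answer. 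I would take the query distribution to be uniform on $\cQ$.

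Next I would instantiate Theorem~\ref{thm:simulate} with sampling probability $p = \Theta(n/(Sw))$. Then $\log(1/p) = \Theta(\log(Sw/n))$, $8pSw + S_\cache w = O(n) + o(nB^{1/6}) = o(nB^{1/6})$, and the error term $2^{-\Omega(pS)} = 2^{-\Omega(n/w)}$ is negligible compared to any $n^{-O(1)}$ target advantage. This would yield a one-way protocol for the weighted game with communication $C = o(nB^{1/6})$ bits and weighted advantage at least $2^{-O(\sqrt{t_\total(t_q \log(Sw/n) + b)} \cdot \log(Sw/n))}$.

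The principal technical task is then to establish a matching communication lower bound: every one-way protocol for the weighted $0$-XOR game using $C = o(nB^{1/6})$ bits has advantage at most $n^{-\Omega(1)}$. I would attack this via a Fourier / Parseval-type argument that exploits the linear (XOR) structure. The indicator $1_{\mathrm{XOR}(Q,I) = 0^b} - 2^{-b}$ expands over the $2^b - 1$ nontrivial characters of $\{0,1\}^b$ applied to the path XOR; each such character is itself a character of the whole input $I$ supported on the $d$ edges along the path of $Q$. Parseval applied to the conditional distribution of $I$ given $\msg$ then bounds the total squared conditional expectation across all characters of the input space by $2^C$. The content of the argument is to show that the characters arising from $\cQ$ consume only a tiny fraction of this budget: since the Butterfly admits permutation routings with $n$ edge-disjoint source-sink paths, I would decompose $\cQ$ into $O(n)$ such routable families, exploit the independence of the path XORs within each family, and combine the global Parseval bound with a Cauchy--Schwarz across families. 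The density $|\cQ| \geq n^2/4$ and the slack $B \geq d^7$ should then drive the weighted advantage down to $n^{-\Omega(1)}$.

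Combining the $n^{-\Omega(1)}$ upper bound on advantage with the simulation-theorem lower bound forces $\sqrt{t_\total(t_q \log(Sw/n) + b)} \cdot \log(Sw/n) = \Omega(\log n)$, which upon rearrangement is exactly the inequality in the lemma. The hardest step will be pushing the communication lower bound all the way up to $C = o(nB^{1/6})$: a one-shot Parseval bound rules out only $C = O(\log n)$, so the Butterfly-specific routing structure, the $B \geq d^7$ slack, and the $|\cQ| \geq n^2/4$ density must be combined in a nontrivial amplification, and the biased (non-balanced) nature of the $0$-XOR output must be absorbed cleanly into the weight $\beta = 2^{-b}$ without losing the sharp dependence on $b$ that appears inside the square root in the simulation bound.
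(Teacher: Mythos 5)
There is a genuine gap, and it sits exactly at the step you call the ``principal technical task.'' The communication lower bound you propose to prove --- that every one-way protocol with $C = o(nB^{1/6})$ bits for the single-query weighted $0$-XOR game (uniform query over $\cQ$) has advantage at most $n^{-\Omega(1)}$ --- is false, and the paper points this out explicitly before introducing its workaround. The counterexample: Alice sends the bit strings on all edges lying on paths between the first $n/\lg^{\Theta(1)}n$ sources and the first $n/\lg^{\Theta(1)}n$ sinks. Because these paths overlap heavily in the Butterfly, this costs far fewer than $nB^{1/6}$ bits (in the parameter regime arising from the reduction, where $B$ is polylogarithmic in $n$), yet with probability $1/\lg^{\Theta(1)}n$ a uniform query has both endpoints in this set and Bob then knows the answer exactly, giving advantage $1/\lg^{\Theta(1)}n \gg n^{-\Omega(1)}$. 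Since the lemma must hold for every $\cQ$ with $|\cQ| \geq n^2/4$, in particular $\cQ = [n]\times[n]$, no Fourier, routing-decomposition, or amplification argument can rescue your claimed bound: the obstruction is a correct protocol, not a weakness of the proof technique. Your own observation that a one-shot Parseval bound only reaches $C = O(\lg n)$ is a symptom of this, but the problem is more fundamental than a quantitative shortfall.

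The paper's proof circumvents this by changing the communication game rather than strengthening the bound for single queries: it forms \emph{meta-queries}, sets $T$ of $k = n/\lg_2 n$ queries from $\cQ$ whose source-sink paths are node-disjoint, with answer the XOR of the $k$ individual answers. It first shows $|M(\cQ)| \geq n^k/9$ (a probabilistic argument), then defines the weighted problem as a product $\cP(T,I) = \prod_{(s,t)\in T}\phi_I(s,t)$ with per-query bias $\pm$, so that $\E_I[\cP(T,\bI)] = 0$ by path-disjointness and $\beta = 2^{-kb}$, and applies Theorem~\ref{thm:simulate} at meta-query cost $kt_\total$, $kt_q$. The communication lower bound (Lemma~\ref{lem:commLB}) is then proved by an $r$-th moment argument with $r = B^{1/6}$: terms where some edge appears exactly once vanish, and the surviving configurations are counted using the fact that a well-spread meta-query is determined by its edge set, yielding that any protocol with $m = o(nB^{1/6})$ bits has advantage at most $\exp(-\Omega(k\lg n))$. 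The factor $k$ then cancels between the simulation bound and the communication bound, producing exactly the inequality in the lemma. Your final rearrangement step is fine, and your moment/Parseval instinct is in the right family of techniques, but without the meta-query device (and the accompanying blow-up of $\beta$ to $2^{-kb}$ and of the target advantage to $\exp(-\Omega(k \lg n))$) the argument cannot be completed.
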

Let us combine the two lemmas to prove Theorem~\ref{thm:xorlb}. Consider $0$-XOR in Multiple Butterflies of degree $B$ and depth up to $d$ with $b=\lg_2 d +6$ bit strings. Assume there is a data structure with worst case update time $t_u$ that answers queries in expected $t_q$ probes. For $B = (d t_u w)^{12}$, we invoke Lemma~\ref{lem:dyntostatic} to obtain a deterministic static data structure for $0$-XOR in One Butterfly $G_i$ of degree $B$ and depth $i \geq d/2+1$ with pre-initialized memory that updates $S=O(t_u d B^{i+1})$ cells and uses a cache of $S_\cache = O(t_u d B^i)$ cells. Furthermore, for $n = B^i$, there is a set $\cQ \subseteq [n] \times [n]$ with $|\cQ| \geq n^2/4$ such that the static data structure answers every query in $\cQ$ in expected total $O(t_q)$ probes and an expected $O(t_q/d)$ probes into updated cells, where the expectation is over a uniform random assignment of $b$-bit strings to the edges of the Butterfly $G_i$.

For $B = (d t_u w)^{12}$, we see that $S_\cache = O(t_u d B^i) = O(t_u d n)=o(nB^{1/6}/w)$. Hence from Lemma~\ref{lem:lbstatic} we get that
\[
t_q((t_q/d) \lg(Sw/n) + \lg d) = \Omega((\lg n/\lg(Sw/n))^2).
\]
Since $\lg n = d \lg B$, we have $\lg d \leq \lg \lg n$. The lower bound thus becomes
\[
t_q = \Omega\left(\min\left\{\frac{\sqrt{d} \lg n}{\lg^{3/2}(Sw/n)} , \frac{\lg^2 n}{\lg^2(Sw/n) \lg \lg n} \right\}\right).
\]
We also have $Sw/n = O(t_u d B) = (t_u w)^{O(1)}$ and $d = \lg_B n$ and thus the lower bound becomes
\[
t_q = \Omega\left(\min\left\{\frac{\lg^{3/2} n}{\lg^{2}(t_u w)} , \frac{\lg^2 n}{\lg^2(t_u w) \lg \lg n} \right\}\right) = \Omega\left(\frac{\lg^{3/2} n}{\lg^{2}(t_u w)}\right).
\]
Since $n = B^i \geq B^{d/2}$ and $B > d$, we have that $\lg n = \Omega(\lg N)$ where $N$ is the total number of nodes in all Butterfly graphs. This concludes the proof of Theorem~\ref{thm:xorlb}.

\subsection{$0$-XOR in One Butterfly to $0$-XOR in Multiple Butterflies}
In the following, we show that an efficient dynamic data structure for $0$-XOR in Multiple Butterflies gives an efficient static data structure for $0$-XOR in One Butterfly. Concretely, we prove Lemma~\ref{lem:dyntostatic} from above.

\begin{proof}[Proof of Lemma~\ref{lem:dyntostatic}]
Recall that in $0$-XOR in Multiple Butterflies, there are $d$ Butterfly graphs $G_d,\dots,G_1$ of depths $d,d-1,\dots,1$. The updates arrive in epochs, starting with epoch $d$, then $d-1$ until epoch $1$. In epoch $i$, we set the bit strings on the edges in $G_i$ independently to uniform random $b$-bit strings. We let $\bU_i$ be the random variable giving the updates of epoch $i$.

After epoch $1$, we ask a query chosen as a uniform random pair $(\bs,\bt) \in [B^d] \times [B^d]$. We let $\cD$ denote the joint distribution of the random updates $\bU = \bU_d,\dots,\bU_1$ and query $(\bs,\bt)$.

The graph $G_i$ has $n_i=(i+1)B^i$ nodes and $m_i = iB^{i+1}$ edges and thus there are $m_i$ updates in epoch $i$.

Given a dynamic data structure for $0$-XOR in Multiple Butterflies, with worst case update time $t_u$ and expected $t_q$ query time to answer the query $(\bs,\bt)$ under the distribution $\cD$, we start by fixing the randomness to obtain a deterministic data structure whose expected query time is no more than $t_q$ under $\cD$.

Next, we zoom in on an epoch $i$ and derive a static data structure for $0$-XOR in One Butterfly. The data structure will have pre-initialized memory corresponding to memory cells changed in epochs $j>i$ and a cache corresponding to cells changed in epochs $j < i$. More formally, for a sequence of updates $U=U_d,\dots,U_1$ in the support of $\cD$, consider processing the updates using the deterministic data structure. Assign each cell in the memory to the epoch in which it was last updated. We let $C_i(U)$ denote the cells assigned to epoch $i$. For a query $(s,t)$, we let $T(U,(s,t))$ denote the set of cells probed when answering $(s,t)$ after updates $U$. By disjointness of the sets $C_i(U)$, we have $t_q = \E[\sum_{i=1}^d |T(\bU,(\bs,\bt)) \cap C_i(\bU)|]$. By linearity of expectation, there must be an epoch $i \in \{d/2+1,\dots,d\}$ for which $\E[|T(\bU,(\bs,\bt)) \cap C_i(\bU)|] \leq 2t_q/d$. Fix such an epoch $i$. 

Next we wish to fix the updates $\bU_{\neq i}=\bU_d,\dots,\bU_{i+1},\bU_{i-1},\dots,\bU_1$ in epochs different from $i$. Intuitively, this will give us a data structure for $0$-XOR in One Butterfly graph corresponding to $G_i$, with the random edge strings given by $\bU_i$. 

For notational convenience, define for a query $(s,t)$ the answer $\phi_j(s,t)$ to $(s,t)$ on $G_j$ as $1$ if the XOR along the path from $s_j = \lfloor s/B^{d-j} \rfloor$ to $t_j = \lfloor t/B^{d-j} \rfloor$ in $G_j$ is the all-$0$ bit string and let $\phi_j(s,t)$ be $0$ otherwise. The answer to the query $(s,t)$ is then $\phi(s,t) = 1-\prod_{j=1}^d(1-\phi_j(s,t))$. Our goal is to fix $\bU_{\neq i}$ such that we can compute $\phi_i(s,t)$ from $\phi(s,t)$ for most pairs $(s,t)$. The main issue is that if any of the epochs $j \neq i$ has $\phi_j(s,t) = 1$, then $\phi(s,t) = 1$ regardless of $\phi_i(s,t)$. In this case, we say that epoch $j$ \emph{blocks} the answer. We will fix $\bU_{\neq i}$ to avoid blocking most queries.

For a query $(s,t)$, call it \emph{valid} for a fixing $U_{\neq i}$ if:
\begin{enumerate}
    \item $(s,t)$ is not blocked by an epoch $j \neq i$.
    \item $\E[|T(\bU,(s,t)) \cap C_i(\bU)| \mid \bU_{\neq i}=U_{\neq i}] \leq 72t_q/d$.
    \item $\E[|T(\bU,(s,t))| \mid \bU_{\neq i} = U_{\neq i}] \leq 36 t_q$. 
\end{enumerate}
For epoch $i$, there are $n=B^i$ sources and $n$ sinks. We claim that there is a fixing $U_{\neq i}$ such that at least $n^2/4$ pairs $(s_j,t_j) \in [n] \times [n]$ satisfies that there is a valid query $(s,t)$ with $s_j = \lfloor s/B^{d-i} \rfloor$ and $t_j = \lfloor t/B^{d-i} \rfloor$. To see this, notice that for a uniform random $\bU_{\neq i}$, any query $(s,t)$ is blocked by epoch $j$ with probability $2^{-b}$. A query $(s,t)$ is thus blocked by an epoch $j \neq i$ with probability at most $d 2^{-b}$. The expected number of blocked queries is thus at most $d 2^{-b}(B^d)^2$. Markov's inequality and a union bound implies the existence of a fixing $U_{\neq i}$ such that at most $3 d 2^{-b}(B^d)^2$ queries are blocked by an epoch $j \neq i$ and at the same time for a uniform $(\bs,\bt)$ we have $\E[|T(\bU,(\bs,\bt)) \cap C_i(\bU)| \mid \bU_{\neq i}=U_{\neq i}] \leq 6t_q/d$ and $\E[|T(\bU,(\bs,\bt))| \mid \bU_{\neq i}=U_{\neq i}] \leq 3t_q$. We claim such a fixing $U_{\neq i}$ satisfies our requirements. 

To see this, we start by defining $\cQ(s_j,t_j)$ as the set of all queries $(s,t) \in [B^d] \times [B^d]$ with $s_j = \lfloor s/B^{d-i} \rfloor$ and $t_j = \lfloor t/B^{d-i} \rfloor$. Now assume for the sake of contradiction that less than $n^2/4$ pairs $(s_j,t_j) \in [n] \times [n]$ satisfies that there is a valid query $(s,t) \in \cQ(s_j,t_j)$. Under this assumption, there are at least $(3/4)n^2$ pairs $(s_j,t_j)$ with no valid query $(s,t) \in \cQ(s_j,t_j)$. Let $(s_j,t_j)$ be an arbitrary such pair. One of the following three must hold: 1) at least $|\cQ(s_j,t_j)|/3$ queries $(s,t) \in \cQ(s_j,t_j)$ are blocked, or 2) at least $|\cQ(s_j,t_j)|/3$ queries $(s,t) \in \cQ(s_j,t_j)$ have $\E[|T(\bU,(s,t)) \cap C_i(\bU)| \mid \bU_{\neq i}=U_{\neq i}] > 72t_q/d$, or 3) at least $|\cQ(s_j,t_j)|/3$ queries $(s,t) \in \cQ(s_t,t_j)$ have $\E[|T(\bU,(s,t)) \cap C_i(\bU)| \mid \bU_{\neq i}=U_{\neq i}] > 36t_q$. Call the first case a type-1 failure, the second case a type-2 failure and the last case a type-3 failure. 

It follows that either there are 1) at least $(1/4)n^2$ pairs $(s_j,t_j)$ that have a type-1 failure, or 2) at least $(1/4)n^2$ pairs $(s_j,t_j)$ that have a type-2 failure or 3) at least $(1/4)n^2$ pairs with a type-3 failure. Since $|\cQ(s_j,t_j)| = (B^{d-i})^2$, the first case gives at least $(1/3)(1/4)(B^{i})^2 (B^{d-i})^2 = (1/12)(B^d)^2$ blocked pairs $(s,t)$. This contradicts that at most $3 d 2^{-b} (B^d)^2$ pairs are blocked if we let $b \geq \lg_2 d + 6$. In the second case, since $(\bs,\bt)$ is uniform random, we have $\E[|T(\bU,(\bs,\bt)) \cap C_i(\bU)| \mid \bU_{\neq i}=U_{\neq i}] > (1/4)(1/3)(72 t_q/d) = 6 t_q/d$. This contradicts that $\E[|T(\bU,(\bs,\bt)) \cap C_i(\bU)| \mid \bU_{\neq i}=U_{\neq i}] \leq 6 t_q/d$. In the third case, we have $\E[|T(\bU,(\bs,\bt)) \cap C_i(\bU)| \mid \bU_{\neq i}=U_{\neq i}] > (1/4)(1/3)(36 t_q) = 3t_q$, which is again a contradiction.

We now have our static data structure. For the fixing $U_{\neq i}$, we pre-initialize the memory by performing all updates in epochs $j > i$. For a uniform random input to $0$-XOR in One Butterfly with degree $B$ and depth $i$, we notice that the distribution of the edge strings is identical to $\bU_i$. We thus think of the input to $0$-XOR in One Butterfly as the updates of epoch $i$ in $0$-XOR in Multiple Butterflies. We thus run the updates in $\bU_i$ to update some of the pre-initialized memory cells. Finally, we run the fixed updates in $U_{\neq i}$ corresponding to epochs $j < i$ and put all cells they change into the cache. We let $\cQ$ be the set of all source-sink pairs $(s_j,t_j)$ for which there is at least one valid query $(s,t)$ in $\cQ(s_j,t_j)$ (this set does not depend on the concrete input $\bU_i$). 

To answer a query $(s_j,t_j) \in \cQ$ for $0$-XOR in One Butterfly, we let $(s,t)$ be the lexicographical first valid query in $\cQ(s_j,t_j)$ and simply execute the query algorithm of the deterministic dynamic data structure on the query $(s,t)$. When it requests a cell, we first check whether it is in the cache (was updated during epochs $j < i$). If so, it is free to access. Otherwise, we simply probe the cell. Since $(s,t)$ is valid, we know it is not blocked and also $\E[|T(\bU,(s,t)) \cap C_i(\bU)| \mid \bU_{\neq i}=U_{\neq i}] \leq 72t_q/d$ and $\E[|T(\bU,(s,t))| \mid \bU_{\neq i}=U_{\neq i}] \leq 36t_q$. Thus $\cQ$ satisfies the claims in the theorem. Since the worst case update time is $t_u$, we have that the updates of epoch $i$ update at most $S = t_u m_i = t_u i B^{i+1} = O(t_u d B^{i+1})$ cells. By the same argument, the cache contains at most $O(\sum_{j<i} t_u d B^{j+1}) = O(t_u d B^i)$ cells.
\end{proof}

\subsection{Lower Bound for $0$-XOR in One Butterfly}
We now turn to proving a lower bound for for the static data structure problem $0$-XOR in One Butterfly when a data structure has pre-initialized memory and a cache. To be compatible with the reduction given in the previous section, the lower bound must hold even if only a constant fraction of the queries can be answered correctly. The goal of the section is to prove Lemma~\ref{lem:lbstatic}.

To prove Lemma~\ref{lem:lbstatic}, we would like to invoke our simulation theorem to obtain a one-way protocol for the communication game in which Alice receives a uniform random input to $0$-XOR in One Butterfly and Bob receives a uniform random query $(\bs,\bt)$ that has good advantage as a function of $S,S_\cache,t_\total$ and $t_q$. Then the lemma would follow by a proving that no such one-way protocol can have a large advantage. 

However, doing this directly on a set of queries $\cQ \subseteq [B^d] \times [B^d]$ will not lead to the desired lower bound. This is because there is a protocol with low communication and high advantage. For ease of notation, let $n = B^d$. To obtain the lower bound we want, we basically need to show that no protocol with communication $n/\lg^{\Theta(1)}n$ can have advantage more than $n^{-\Omega(1)}$. But there is a simple protocol with much higher advantage (at least if $\cQ$ is the set of all queries). We sketch the protocol here for the interested reader:
\begin{itemize}
\item Alice picks the first $n/\lg^{\Theta(1)}n$ sources and sinks. The total number edges on paths between these sources and sinks is $n/\lg^{\Theta(1)}n$ since there is a large overlap in these paths (all leading digits of visited nodes across the layers start with $0$'s). She sends the bit strings on all these edges to Bob. On a uniform random query $(\bs,\bt)$, there is a $1/\lg^{\Theta(1)}n$ chance that both $\bs$ and $\bt$ are among the first $n/\lg^{\Theta(1)}n$ sources and sinks. In this case, Bob knows the answer to the query. If it is not, Bob simply guesses. The advantage is thus $1/\lg^{\Theta(1)}n$, i.e. much higher than $n^{-\Omega(1)}$.
\end{itemize}

What we exploited in the above protocol, is that there is a large collection of queries that together are easy, namely all those $(s,t)$ with $s$ and $t$ among the first $n/\lg^{\Theta(1)}n$ indices. To get around this, we follow previous works and introduce the concept of \emph{meta-queries}. A meta-query is specified by a set of queries $T \subseteq \cQ$ with $|T|=k$, and its answer is the XOR of the answers to all the queries in $T$. Clearly a data structure that can answer every individual query $(s,t) \in T$ in expected $t_\total$ total probes and $t_q$ probes to updated cells, can answer the meta-query in expected $k t_\total$ total probes and $k t_q$ probes to updated cells. We will thus define meta-queries that are sufficiently \emph{well-spread}, i.e. the paths on which to compute XOR's do not share any nodes or edges. We formalize this in the following.

\paragraph{Meta-Queries.}
Let $\cQ \subseteq [n] \times [n]$ be a subset of queries. From $Q$, we construct a set of \emph{meta-queries} $M(\cQ)$. Let $k=n/\lg_2 n$. We add to $M(\cQ)$ one query for every set $T$ of $k$ source-sink pairs $T=\{(s_1,t_t),\dots,(s_k,t_k)\}$ with the property that the unique paths between the source-sink pairs $(s_i,t_i)$ share no nodes. The answer to the query $T=\{(s_1,t_t),\dots,(s_k,t_k)\}$ is obtained as follows: If $I$ is an input to $0$-XOR in One Butterfly, then for a query $(s,t) \in \cQ$, let $\psi_I(s,t)$ be $1$ if the XOR of bit strings on the edges along the path from $s$ to $t$ with edge weights given by $I$ is all-zero. Otherwise, let $\psi_I(s,t) = 0$. Finally, let the answer to the query $T$ be $\bigoplus_{(s,t)\in T} \psi_I(s,t)$. The answer to the meta-query is thus the XOR of the answers to the individual queries.

We first prove that $M(\cQ)$ is large if $\cQ$ is large
\begin{lemma}
\label{lem:largeM}
For any $\cQ \subseteq [n] \times [n]$ with $|\cQ| \geq n^2/4$, if $n$ and $B$ are at least some sufficiently large constants, then we have $|M(\cQ)| \geq n^k/9$.
\end{lemma}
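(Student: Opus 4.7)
The plan is a greedy counting argument: construct meta-queries by picking source-sink pairs from $\cQ$ one at a time, requiring each new pair's $s$-to-$t$ path to be node-disjoint from all previously chosen paths. To bound the number of ``blocked'' candidates at each step I need one combinatorial fact about the Butterfly: \emph{each node lies on exactly $n$ source-sink paths}. To verify this, fix a node $v^\ell_j$ at layer $\ell$; the unique path from $s$ to $t$ visits $v^\ell_j$ iff the base-$B$ representations satisfy $s \equiv j$ in the upper $d-\ell$ digits and $t \equiv j$ in the lower $\ell$ digits, giving $B^\ell \cdot B^{d-\ell} = B^d = n$ pairs. Since each source-sink path uses exactly $d+1$ nodes (one per layer), any collection of $j$ node-disjoint paths blocks at most $j(d+1)n$ source-sink pairs in total.

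Given this, if I have already chosen $j$ disjoint paths with $j < k$, then the number of pairs in $\cQ$ whose path remains node-disjoint from all of them is at least
\[
|\cQ| - j(d+1)n \;\geq\; \tfrac{n^2}{4} - k(d+1)n.
\]
Plugging in $k = n/\lg_2 n$ and using $\lg_2 n = d \lg_2 B$, the subtracted term becomes $(d+1)n^2/\lg_2 n \leq 2n^2/\lg_2 B$, which for $B$ at least a sufficiently large constant is below, say, $n^2/20$. Thus at every step of the greedy process I have at least $n^2/5$ admissible pairs available in $\cQ$.

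Consequently, the number of \emph{ordered} $k$-tuples of node-disjoint pairs from $\cQ$ is at least $(n^2/5)^k$. Since a meta-query is an unordered set, I divide by $k!$ and use $k! \leq k^k$ to obtain
\[
|M(\cQ)| \;\geq\; \frac{(n^2/5)^k}{k!} \;\geq\; \left(\frac{n^2}{5k}\right)^{\!k} \;=\; \left(\frac{n \lg_2 n}{5}\right)^{\!k}.
\]
For $n$ at least a sufficiently large constant, $n \lg_2 n / 5 \geq n \cdot 9^{1/k}$ (indeed $\lg_2 n \geq 5 \cdot 9^{1/k}$ follows from $\lg_2 n \geq 10$), which yields $|M(\cQ)| \geq n^k/9$ as required.

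There is no serious obstacle here beyond careful constants; the only structural input is the node-path incidence count in the Butterfly, which follows directly from the digit-changing description of paths. The role of the hypotheses ``$n$ and $B$ sufficiently large'' is precisely to ensure (i) $2/\lg_2 B$ is small enough that blocking never exhausts the $n^2/4$ candidates in $\cQ$, and (ii) the $(n \lg_2 n/5)^k$ bound absorbs the factor $1/9$.
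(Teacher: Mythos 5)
Your proof is correct, but it takes a genuinely different route from the paper. The paper argues probabilistically: it draws $10k$ uniform source-sink pairs, shows that any \emph{fixed} meta-query is contained in this random list with probability at most $n^{-k}$, and shows via indicator variables and Markov's inequality that with probability at least $1/9$ the list contains $k$ pairs from $\cQ$ with pairwise node-disjoint paths (here it only needs that two independent uniform paths collide in a given layer with probability $1/n$, summed over $d+1$ layers); a union bound then forces $|M(\cQ)| \geq n^k/9$. You instead count deterministically, using the exact incidence fact that every node of the Butterfly lies on precisely $n = B^d$ source-sink paths (your digit computation $B^\ell \cdot B^{d-\ell}$ is right, and matches the paper's path description), so that $j < k$ chosen disjoint paths block at most $j(d+1)n \leq 2n^2/\lg_2 B \leq n^2/20$ candidates; the greedy product $(n^2/5)^k$ divided by $k!$ then gives $|M(\cQ)| \geq (n\lg_2 n/5)^k$, which is in fact a \emph{stronger} bound than $n^k/9$ (it already suffices that $\lg_2 n \geq 5$, making your $9^{1/k}$ bookkeeping unnecessary). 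Both arguments use $B$ large for the same purpose — making the per-step collision/blocking term $O(1/\lg_2 B)$ small against the density $n^2/4$ of $\cQ$ — and your use of $n$ large is only for constants, as in the paper. Your approach is more elementary and yields a sharper count; the paper's sampling argument avoids needing the exact node-path incidence count and delivers the $n^k/9$ form directly, but for this lemma either suffices.
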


\begin{proof}
We give a probabilistic argument. Consider a fixed meta-query $T \in M(\cQ)$ with $T = \{(s_i,t_i)\}_{i=1}^k$. Now let $T' = (s'_1,t'_1),\dots,(s'_{10k},t'_{10k})$ be an ordered list of $10k$ uniform and independent source-sink pairs. We show two things. First, we argue that there is only a very small chance that all the pairs in $(s_i,t_i) \in T$ are among the queries in $T'$. Next, we argue that there is a large chance that $T'$ contains a subset $T''$ of $k$ queries such that $T'' \in M(\cQ)$. These two can only both be true if $M(\cQ)$ is large.

For the first part, notice that for a fixed $T$, there are at most $\binom{10k}{k}k! (n^2)^{9k}$ choices of $T'$ such that $T \subseteq T'$. The $\binom{10k}{k}$ term accounts for positions in the list $T'$ where $T$ occurs, $k!$ accounts for all permutations of these positions and $(n^2)^{9k}$ accounts for the queries in $T'$ outside the chosen $k$ positions. Since there are $(n^2)^{10k}$ choices for the uniform random $T'$, the probability that $T \subseteq T'$ is then bounded by $\binom{10k}{k}k!n^{-2k} = n^{-2k} (10k)!/(9k)! \leq n^{-2k} (10k)^k = (10k/n^2)^k = (10/( n \lg_2 n))^{k}$. For $n \geq 2^{10}$, this is at most $n^{-k}$.

For the second part, define an indicator $X_i$ for every pair $(s'_i,t'_i)$ in $T'$. The indicator takes the value $1$ if $(s'_i,t'_i) \in \cQ$ and the unique path from $s'_i$ to $t'_i$ in the Butterfly does not share any node with a path between any other pair $(s'_j,t'_j)$ in $T'$. We first argue that $\E[X_i]$ is large. For this, notice first that $\Pr[(s'_i,t'_i) \in \cQ] \geq 1/4$. Next, consider a pair $(s'_j,t'_j)$ with $i \neq j$. For any layer $\ell$ of the Butterfly, the distribution of the node visited by the two paths is uniform random, and the two are independent. Hence the probability that they use the same node in any of the $d+1$ layers is no more than $(d+1)/B^d = (d+1)/n = (\lg_B n)/n$. A union bound over all $10k-1$ choices of $j'$ implies that the path from $s'_i$ to $t'_i$ intersects any of the other paths is no more than $10k (\lg_B n)/n = 10/\lg_2 B$. For $B$ larger than some constant, this is at most $1/100$. It follows that $\Pr[X_i =1] \geq 1/4 - 1/100 = 24/100$. Hence $\E[\sum_i X_i] \geq 10k (24/100) = (24/10)k > 2k$. Now consider the random variable $Y=10k - \sum_i X_i$. This is a non-negative random variable with expectation at most $8k$. Hence by Markov's inequality, we have $\Pr[Y > 9k] < 8/9$. That is, $\Pr[\sum_i X_i < k] < 8/9$. But choosing any set of $k$ pairs $(s'_i,t'_i)$ where $X_i=1$ for all of them results in a set $T'' \in M(\cQ)$. Thus a $T''$ exists with probability at least $1/9$.

We now combine the above two to lower bound $|M(\cQ)|$. For this, notice that the probability that the random $T'$ contains a $T'' \in M(\cQ)$ and yet no $T \in M(\cQ)$ is contained in $T'$ is $0$ (these are contradictory). But a union bound implies that the probability of this event is at least $1/9 - |M(\cQ)|n^{-k}$. Hence we must have $|M(\cQ)| \geq n^{k}/9$.
\end{proof}

\paragraph{Meta-Query Communication Game.}
With meta-queries defined, we are ready to reduce to a one-way communication game. Assume we have a data structure for $0$-XOR in One Butterfly as in the requirements of Lemma~\ref{lem:lbstatic}. Let $\cQ \subseteq [n] \times [n]$ be the promised set of at least $n^2/4$ queries and let $M(\cQ)$ be the corresponding meta-queries. Let $\cI$ be the set of all possible assignments of $b$-bit strings to a Butterfly of degree $B$ and depth $d$.

We now define a data structure problem $\cP$ with weights $M(\cQ) \times \cI \to [-1,1]$. For a query $(s,t) \in M(\cQ)$, define $\phi_I(s,t)$ to be $-(1-2^{-b})$ if $\psi_I(s,t) = 1$ and define it to be $2^{-b}$ otherwise. Now define
\[
\cP(T,I) := \prod_{(s,t) \in T} \phi_I(s,t)
\]
Let us make a few observations about $\cP$. First, we always have $|\cP(T,I)| \geq 2^{-kb}$. Next, notice that if a data structure can compute the answer $\bigoplus_{(s,t) \in T} \psi_T(s,t)$ to the meta-query $T$ on $M(\cQ)$, it can also compute the sign of $\cP(T,I)$. This is because the sign of $\cP(T,I)$ is equal to $(-1)^{\sum_{(s,t) \in T} \psi_I(s,t)}$ and this is determined from the parity of $\sum_{(s,t) \in T} \psi_I(s,t)$ which is equal to $\bigoplus_{(s,t) \in T} \psi_I(s,t)$. 

Let $\cD_{M(\cQ)} \times \cD_\cI$ be the product distribution over $M(\cQ) \times \cI$, giving a uniform random $\bT \in M(\cQ)$ and a uniform random $\bI$ in $\cI$. Observe that for every $T \in M(\cQ)$, it holds that $\E_\bI[\cP(T,\bI)] = 0$. This is because the paths in $T$ are disjoint and hence each of them XOR's to the all-zero bit string $\bar{0}$ with probability precisely $2^{-b}$ and this is independent across all of them. We thus have 
\begin{eqnarray*}
    \E_\bI[\cP(T,\bI)] &=& \E_\bI\left[ \prod_{(s,t) \in T} \left(1_{\psi_\bI(s,t)=1} \cdot (-(1-2^{-b})) + 1_{\psi_\bI(s,t)=0} \cdot 2^{-b}\right)\right] \\
    &=& 
     \prod_{(s,t) \in T} \E_\bI\left[1_{\psi_\bI(s,t)=1} \cdot (-(1-2^{-b})) + 1_{\psi_\bI(s,t)=0} \cdot 2^{-b}\right] \\
     &=&
     \prod_{(s,t) \in T} \left(2^{-b}\cdot (-(1-2^{-b})) + (1-2^{-b})\cdot 2^{-b}\right) \\
     &=& 0.
\end{eqnarray*}
Now observe that the data structure satisfying the assumptions in Lemma~\ref{lem:lbstatic} can answer any meta-query $T \in M(\cQ)$ in a expected total probes $k t_\total$ with an expected $k t_q$ probes to updated cells, where the expectation is over a random $\bI \sim \cD_\cI$. This is because each of the individual queries in $T$ can be answered in expected $t_\total$ total probes and $t_q$ probes into updated cells.

We now wish to invoke Theorem~\ref{thm:simulate} to obtain a one-way protocol for $G_\cP$. We see that we can choose $\beta = 2^{-kb}$. For any $p \in (0,1)$, Theorem~\ref{thm:simulate} now gives a one-way protocol for $G_\cP$ under the product distribution $\cD_{M(\cQ)} \times \cD_{\cI}$ with
\begin{eqnarray*}
    \adv(G_\cP,\cD_{M(\cQ)}, \cD_{\cI}, (8pS + S_\cache) \cdot w) &\geq& 2^{-O(\sqrt{k t_\total (k t_q \lg 1/p + kb)} \cdot \lg 1/p)} - 2^{-\Omega(p S)} \\
    &=& 2^{-O(k \sqrt{ t_\total (t_q \lg 1/p + b)} \cdot \lg 1/p)} - 2^{-\Omega(p S)}.
\end{eqnarray*}
What remains is to prove that no protocol can have a large advantage. This is the contents of the following lemma
\begin{lemma}
\label{lem:commLB}
There is a universal constant $c>0$, such that for $B \geq d^7$ and $d$ at least a sufficiently large constant, 
    any protocol for $G_\cP$ must either send at least $cn B^{1/6}$ bits or have advantage at most $\exp(-c k \lg n)$ under $\cD_{M(\cQ)} \times \cD_\cI$.
\end{lemma}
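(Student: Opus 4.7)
The plan is to bound the communication advantage by Fourier analysis on the input hypercube $(\{0,1\}^b)^E$, where $E$ is the edge set of the depth-$d$, degree-$B$ Butterfly. Starting from the Fourier identity $\mathbf{1}[X=0] = 2^{-b}\sum_{z}\chi_z(X)$ on $\{0,1\}^b$ (with $\chi_z(y)=(-1)^{z\cdot y}$), one has $\phi_I(s,t) = -2^{-b}\sum_{z\ne 0}\chi_z(X_{s,t}^I)$. Because the $k$ source-sink paths in any $T\in M(\cQ)$ are edge-disjoint, multiplying out gives
\[
\cP(T,I) \;=\; (-2^{-b})^k \sum_{\vec z \in (\{0,1\}^b\setminus\{0\})^k} \chi_{\vec Z(T,\vec z)}(I),
\]
where $\vec Z(T,\vec z)\in(\{0,1\}^b)^E$ assigns label $z_i$ to every edge of the $i$-th path and $0$ elsewhere. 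Thus the Fourier spectrum of $\cP(T,\cdot)$ is supported on ``valid'' labelings, each of Hamming weight exactly $kd$ and magnitude $(-2^{-b})^k$.

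Let $f_m = \mathbf{1}_{S_m}/p_m$ be the density of the message-conditional distribution relative to uniform, and set $\alpha_{\vec W}(m)=\hat f_m(\vec W)$. Plancherel gives $g_T(m):=\E_{I\mid m}[\cP(T,I)] = (-2^{-b})^k\sum_{\vec z}\alpha_{\vec Z(T,\vec z)}(m)$. Applying Cauchy--Schwarz to the $(2^b-1)^k$ inner terms and then to the outer expectation,
\[
\adv^2 \;\le\; \E_{T,M}[g_T(M)^2] \;\le\; 2^{-kb}\E_{T,M}\!\left[\sum_{\vec z}\alpha_{\vec Z(T,\vec z)}(M)^2\right].
\]
The map $(T,\vec z)\mapsto \vec Z(T,\vec z)$ is injective because the support of $\vec Z(T,\vec z)$ uniquely decomposes into $k$ node-disjoint source-sink paths in the Butterfly (from which $T$ is read off) and the common label on each path recovers $\vec z$. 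Combined with $|M(\cQ)|\ge n^k/9$ from Lemma~\ref{lem:largeM},
\[
\E_{T,M}\!\left[\sum_{\vec z}\alpha_{\vec Z(T,\vec z)}(M)^2\right] \;\le\; \frac{9}{n^k}\E_M\!\left[\sum_{\vec W\text{ valid}}\alpha_{\vec W}(M)^2\right],
\]
where every summand has Hamming weight exactly $kd$.

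The crucial technical input is a level-$kd$ Fourier inequality for the density $f_m$ of a uniform distribution on a set of measure $p_m$. Adapting the Bourgain--Katz--Tao / Talagrand level-$r$ inequality from the Boolean cube to the product-alphabet hypercube $(\{0,1\}^b)^E$ (which is obtained by splitting each block into $b$ bits and summing the contributions over binary levels $r,\dots,rb$), one gets, for $r\le O(\log(1/p_m))$,
\[
\sum_{|\vec W|_0 = r}\hat f_m(\vec W)^2 \;\le\; \frac{1}{p_m}\!\left(\frac{C'(\log(1/p_m)+b)}{r}\right)^{\!r}.
\]
Setting $r=kd$ and averaging over $M$ (using $H(M)\le C$ together with $|\mathrm{supp}(M)|\le 2^C$ to control the tail of $\log(1/p_M)$), the expected level-$kd$ mass is at most $(O((C+b)/kd))^{kd}$ up to sub-exponential factors.

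Finally, substitute $k = n/\lg n$ (so $k\lg n = n$), $kd = n/\lg B$, and $C\le cnB^{1/6}$ for a small absolute constant $c$; the hypothesis $B\ge d^7$ with $d$ sufficiently large absorbs the $\lg\lg B$ corrections. A direct calculation yields
\[
\adv^2 \;\le\; \frac{9\cdot 2^{-kb}}{n^k}\cdot\!\left(\frac{O(cnB^{1/6})}{n/\lg B}\right)^{n/\lg B} \;\le\; 2^{-kb}\cdot 2^{-n}\cdot 2^{n/6 + o(n)} \;=\; 2^{-5n/6+o(n)},
\]
so $\adv\le \exp(-\Omega(n)) = \exp(-\Omega(k\lg n))$, as required. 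The main obstacle is the level-$r$ Fourier inequality on the multi-bit alphabet together with the averaging over messages; the slack in that combinatorial balance is exactly what produces the $B^{1/6}$ factor in the communication threshold, since a naive Parseval bound that forgets the level restriction (treating all of $\sum_{\vec W}\alpha_{\vec W}(m)^2 \le 1/p_m$ as relevant) only yields a $C = O(n)$ threshold, which would be insufficient for the subsequent application that must accommodate the cache of the static data structure.
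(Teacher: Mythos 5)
Your route is genuinely different from the paper's: the paper fixes one ``good'' message $A$ and bounds the $r$-th moment $\E_{\bI}[\langle P_\bI,\chi\rangle^r]$ for the \emph{small} value $r=B^{1/6}$ by counting tuples of meta-queries in which every edge is covered at least twice, whereas you bound the advantage by the Fourier mass of the message-conditional density at the \emph{very high} block level $kd=n/\lg B$ and appeal to a level-$r$ (Chang/Talagrand-type) inequality. The first half of your argument is sound and exploits exactly the same structural facts the paper uses: node-disjointness of the paths inside a meta-query makes $(T,\vec z)\mapsto \vec Z(T,\vec z)$ injective (the support decomposes uniquely into the $k$ paths), and any character containing a uniquely-covered edge has mean zero; together with Lemma~\ref{lem:largeM} and Cauchy--Schwarz this correctly reduces the problem to bounding $\E_M\bigl[\sum_{|\vec W|_0=kd}\hat f_M(\vec W)^2\bigr]$, and your closing arithmetic is consistent with the target $\exp(-\Omega(k\lg n))$.

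The gap is in the key technical inequality and in the averaging over messages. First, the level-$r$ inequality you need lives over the alphabet $\Sigma=\{0,1\}^b$ with \emph{block} Hamming weight, and the derivation you propose (split each block into $b$ bits and sum the binary level-$r'$ inequalities over $r'\in\{r,\dots,rb\}$) does not give your stated bound: for $\log(1/p_m)\approx C=cnB^{1/6}\gg kdb$ the dominant binary level is $r'=kdb$, and the resulting bound is roughly $(O(C/(kdb)))^{kdb}=2^{\Theta(nb/6)}=2^{\Theta(n\lg d)}$, which exceeds the available budget of $2^{(1-\delta)n}$ once $d$ is large, so the final display becomes vacuous. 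What is actually needed is a genuinely $q$-ary level inequality ($q=2^b=64d$) losing at most $\mathrm{poly}(q)$ per degree, e.g. $\sum_{|\vec W|_0=r}\hat f_m(\vec W)^2\le (O(q\log(1/p_m))/r)^r$; this would suffice precisely because $B\ge d^7$ absorbs the $\mathrm{poly}(d)$ loss, but it requires its own hypercontractivity argument over the larger alphabet, which the proposal does not supply (and the additive ``$+b$'' form you state is not what bit-splitting yields). Second, the averaging over $M$ does not go through with the $1/p_m$ prefactor in your inequality: that prefactor cancels the weight $p_m$, so summing over up to $2^C$ messages (with $\log(1/p_M)$ unbounded in the tail) costs at least a factor $2^C=2^{\Theta(nB^{1/6})}$, which is fatal rather than ``sub-exponential''. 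You need the prefactor-free form (valid only for $r\le O(\log(1/p_m))$), a separate treatment of high-probability messages with $\log(1/p_m)=o(kd)$ --- there plain Parseval gives mass at most $1/p_m\le 2^{O(kd)}=2^{o(n)}$ --- and a tail computation using that $p_m\,(O((L+b)/kd))^{kd}$ decays in $L=\log(1/p_m)$ once $L\gg kd$. These repairs look feasible, but as written the crucial step of the proof is not established.
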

Before proving Lemma~\ref{lem:commLB}, let us use it to complete the proof of Lemma~\ref{lem:lbstatic}. We choose $p = c'n/(Sw)$ for a sufficiently large constant $c' > 0$. In this case, the $8pSw$ communication is less than $cn/2$ and we conclude that either $S_\cache = \Omega(nB^{1/6}/w)$ or
\[
k \sqrt{ t_\total (t_q \lg(Sw/n) + b)} \cdot \lg(Sw/n) = \Omega(k \lg n + n/w) = \Omega(k \lg n).
\]
This yields the lower bound
\[
t_\total (t_q \lg(Sw/n) + b) = \Omega((\lg n/\lg(Sw/n))^2).
\]
as claimed in Lemma~\ref{lem:lbstatic}. 

What remains is thus to prove the communication lower bound.

\begin{proof}[Proof of Lemma~\ref{lem:commLB}]
Let $\cQ \subseteq [n] \times [n]$ have $|\cQ| \geq n^2/4$, let $M(\cQ)$ be the corresponding meta-queries and $\cP : M(\cQ) \times \cI \to [-1,1]$ be the data structure problem with weights, where $\cI$ is the set of all assignments of $b$-bit strings to the edges of a degree $B$ and depth $d$ Butterfly with $n = B^d$. Let $\cD_{M(\cQ)} \times \cD_\cI$ be the product distribution over a uniform random $\bT$ in $M(\cQ)$ and a uniform random $\bI \in \cI$.

Assume there is a one-way protocol $\pi : \cI \to \{0,1\}^m$ for the communication game $G_\cP$ with advantage $\eps$ and $m$ bits of communication. For any message $A$ that Alice may send, let $\chi_{A} \in \{-1,1\}^{|M(\cQ)|}$ be the vector having one entry per $T \in M(\cQ)$. The entry $\chi_{A}(T)$ takes the value $\sign(\E_{\bI}[\cP(T,\bI) \mid \pi(\bI) = A])$. Similarly, define the vectors $P_I \in [-1,1]^{|M(\cQ)|}$ where the entry corresponding to a $T \in M(\cQ)$ takes the value $\cP(T,I)$.

By definition of the advantage, we have
\[
\eps = |M(\cQ)|^{-1} \E_{\bI}[\langle P_\bI, \chi_{\pi(\bI)}\rangle].
\]
We start by restricting our attention to a concrete message $A$ and corresponding vector $\chi_A$. We claim there must be message $A$ such that both of the following hold
\begin{itemize}
\item $\E_{\bI}[\langle P_\bI, \chi_{\pi(\bI)}\rangle \mid \pi(\bI)=A] \geq (\eps/2)|M(\cQ)|$.
\item $\Pr_{\bI}[\pi(\bI) = A] \geq (\eps/2)2^{-m}$.
\end{itemize}
To see this, assume for the sake of contradiction that no such message exists. Then we would have
\begin{eqnarray*}
    |M(\cQ)|\eps &=& \E_{\bI}[\langle P_\bI, \chi_{\pi(\bI)}\rangle] \\
    &=& \sum_{A \in \{0,1\}^m} \Pr_\bI[\pi(\bI)=A]\E_{\bI}[\langle P_\bI, \chi_{\pi(\bI)}\rangle \mid \pi(\bI)=A] \\
    &=&\sum_{A \in \{0,1\}^m : \Pr_{\bI}[\pi(\bI) = A] < (\eps/2)2^{-m}} \Pr_\bI[\pi(\bI)=A]\E_{\bI}[\langle P_\bI, \chi_{\pi(\bI)}\rangle \mid \pi(\bI)=A]\\ 
    &+& \sum_{A \in \{0,1\}^m : \Pr_{\bI}[\pi(\bI) = A] \geq (\eps/2)2^{-m}} \Pr_\bI[\pi(\bI)=A]\E_{\bI}[\langle P_\bI, \chi_{\pi(\bI)}\rangle \mid \pi(\bI)=A] \\
    &<& \sum_{A \in \{0,1\}^m : \Pr_{\bI}[\pi(\bI) = A] < (\eps/2)2^{-m}} (\eps/2)2^{-m} \E_\bI[\|P_\bI\|_1 \mid \pi(\bI)=A] \\
    &+&\sum_{A \in \{0,1\}^m : \Pr_{\bI}[\pi(\bI) = A] \geq (\eps/2)2^{-m}} \Pr_\bI[\pi(\bI)=A](\eps/2)|M(\cQ)|\\
    &\leq& 2^m (\eps/2) 2^{-m}|M(\cQ)| + (\eps/2)|M(\cQ)|,
\end{eqnarray*}
i.e. a contradiction. So fix such a message $A$ and let $\cI' \subseteq \cI$ be the subset of inputs $I$ for which $\pi(I) = A$. Then $|\cI'| \geq (\eps/2)2^{-m}|\cI|$ and for a uniform random $\bI'$ in $\cI'$, we have $\E_{\bI'}[\langle P_{\bI'} , \chi_A \rangle] \geq (\eps/2)|M(\cQ)|$.

We show that for any $\chi \in \{-1,1\}^{|M(\cQ)|}$, 
\[
\E_{\bI'}[\langle P_{\bI'}, \chi\rangle ]
\]
must be small if $\cI'$ is large. For this, consider the following $r$'th moment, for an even integer $r \geq 2$ to be determined. Here the expectation is over a uniform $\bI$ from $\cI$:
\begin{eqnarray*}
\E_{\bI}[\langle P_\bI, \chi \rangle^r] &=& \\
\sum_{T_1,\dots,T_r} \prod_{i=1}^t \chi(T_i) \cdot \E_{\bI} \left[\prod_{i=1}^r P_{\bI}(T_i) \right] &\leq& \\
\sum_{T_1,\dots,T_r} \left| \E_{\bI} \left[\prod_{i=1}^r P_{\bI}(T_i)\right] \right| &=& \\
\sum_{T_1,\dots,T_r} \left| \E_{\bI} \left[\prod_{i=1}^r \prod_{(s,t) \in T_i} \phi_\bI(s,t) \right] \right|.
\end{eqnarray*}
Now consider a term corresponding to some $T_1,\dots,T_r$. Each of the queries $(s,t) \in T_1 \cup \cdots \cup T_r$ specifies a source-sink path. Assume that there is some edge $e$ that occurs in \emph{exactly} one of these paths and let $(s^\star,t^\star)$ be the corresponding query. Then $\phi_\bI(s^\star,t^\star)$ is independent of all other $\phi_\bI(s,t)$ since even conditioned on the bit strings assigned to all edges other than $e$, the XOR along the $s$-$t$ path is still uniform random. Thus for such $T_1,\dots,T_r$ we have
\begin{eqnarray*}
\E_{\bI} \left[\prod_{i=1}^r \prod_{(s,t) \in T_i} \phi_\bI(s,t) \right]  &=& \\
\E_{\bI}[\phi_{\bI}(s^\star,t^\star)] \E_{\bI} \left[\prod_{i=1}^r \prod_{(s,t) \in T_i \setminus \{(s^\star,t^\star)\}} \phi_\bI(s,t) \right]  &=& 
0.
\end{eqnarray*}
If on the other hand it holds that every edge that occurs in an $s$-$t$ path occurs at least twice, then
\[
\left|\E_{\bI} \left[\prod_{i=1}^r \prod_{(s,t) \in T_i} \phi_\bI(s,t) \right]\right| \leq 1.
\]
Thus if $\Gamma$ denotes the number of lists $T_1,\dots,T_r \in M(\cQ)^r$ for which every edge $e$ in $G_i$ occurs in either zero or at least two $s$-$t$ query paths among all queries in the meta-queries $T_1,\dots,T_r$, then
\[
\E_{\bI}[\langle P_\bI, \chi \rangle^r] \leq \Gamma.
\]
Using that $\Pr_{\bI}[\bI \in \cI'] = |\cI'|/|\cI|$ and non-negativity of $\langle P_I,\chi \rangle^r$, we further have for $\bI'$ uniform in $\cI'$ that
\[
\E_{\bI'}[\langle P_\bI, \chi \rangle^r ] = \E_{\bI}[\langle P_\bI, \chi \rangle^r \mid \bI \in \cI'] \leq \Gamma |\cI|/|\cI'|.
\]
Hence it must be the case that
\[
\E_{\bI'}[\langle P_{\bI'}, \chi \rangle  ] \leq \Gamma^{1/r} (|\cI|/|\cI'|)^{1/r}.
\]
But we showed that there exists a $\chi_A$ with $\E_{\bI'}[\langle P_{\bI'}, \chi_A\rangle] \geq (\eps/2)|M(\cQ)|$ while $|\cI'| \geq (\eps/2)2^{-m} |\cI|$. Thus we conclude that
\[
(\eps/2)|M(\cQ)| \leq \Gamma^{1/r} (\eps^{-1} 2^{m+1})^{1/r}.
\]
For $r \geq 8$, this implies that
\[
(\eps/2)^{1+1/r} \leq |M(\cQ)|^{-1} \Gamma^{1/r} 2^{m/r} \Rightarrow \eps \leq 2 |M(\cQ)|^{-8/9} \Gamma^{1/r} 2^{m/r}.
\]
By Lemma~\ref{lem:largeM}, we have $|M(\cQ)| \geq n^k/9$ and thus
\[
\eps \leq 18 n^{-8k/9}\Gamma^{1/r} 2^{m/r}.
\]
What remains is thus to choose $r$ and to upper bound $\Gamma$. For this, notice that any $T_1,\dots,T_r$ where every edge occurs either zero or at least two times along the query paths can be uniquely described as follows: First, specify a set of edges $E$ such that all query paths use only edges in $E$. Since there are $r k$ query paths of length $d$, and every edge occurs either zero or at least two times, there is a such a set $E$ of cardinality only $rkd/2$. With such an edge set specified, each $T_i$ is uniquely determined by specifying its $kd$ edges as a subset of $E$. This is because all query paths in one meta-query share no nodes or edges. Thus knowing the set of edges occurring in the paths uniquely determine the whole set of queries in $T_i$. This is where we exploit that the meta-queries are well-spread. Recalling that $k=n/\lg_2 n$ and $n=B^d$, we have thus argued that
\[
\Gamma \leq \binom{dB^{d+1}}{rkd/2} \binom{rkd/2}{kd}^r \leq \left(\frac{2e B^{d+1}}{rk} \right)^{rkd/2} \left(\frac{er}{2}\right)^{rkd} = \left(\frac{e^3 r B^{d+1}}{2k} \right)^{rkd/2} = \left(\frac{e^3 r B \lg_2 n }{2} \right)^{rkd/2}
\]
Now fix $r = B^{1/6}$ and assume $B \geq d^7$ and $d$ at least a sufficiently large constant. For such $B$ and $d$, it holds that $(e^3 \lg_2 n) = (e^3 d \lg_2 B) \leq B^{1/3}$. We then have
\[
\Gamma \leq (B^{3/2})^{r k d/2} = B^{(3/4) r k \lg_B n} = 2^{(3/4)r k \lg_2 n} = n^{(3/4) r k}.
\]
Inserting this above, we conclude
\[
\eps \leq 18 n^{-8k/9} n^{(3/4) k} 2^{m/r} \leq 18 n^{-k/9} 2^{m/r} \leq 18 n^{-k/9} 2^{m/B^{1/6}}.
\]
Since $k = n/\lg_2 n$, we conclude that there is a constant $c>0$ such that either $m \geq c n B^{1/6}$ or $\eps \leq \exp(-c k \lg n)$.
\end{proof}

\section{Deferred Proof of Simulation Theorem}
\label{sec:defer}
In this section, we give the deferred proof of Theorem~\ref{thm:simulate}. In the proof, we make use of the Peak-to-Average lemma by Larsen, Weinstein and Yu:
\begin{lemma}[Peak-to-Average~\cite{LWY18}]\label{lem_peak_to_avg}
	Let $f:\Sigma^k\rightarrow \mathbb{R}$ be any real function on the length-$k$ strings over alphabet $\Sigma$, satisfying:
	\begin{enumerate}
		\item $\sum_{z\in\Sigma^k}\left|f(z)\right|\leq 1$; and 
		\item $\max_{z\in\Sigma^k}\left|f(z)\right|\geq \varepsilon$
	\end{enumerate}
	for some $\varepsilon\in(0,1]$.
	Then there exists a subset $Y$ of indices, $\left|Y\right|\leq O(\sqrt{k\cdot\lg(1/\varepsilon)})$, such that
	\[
		\sum_{y\in\Sigma^Y}\left|\sum_{z\mid_Y=y}f(z)\right|\geq \exp(-\sqrt{k\cdot\lg(1/\varepsilon)}).
	\]
\end{lemma}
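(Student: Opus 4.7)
I follow the cell-sampling paradigm of Larsen--Weinstein--Yu, extended to the weighted setting of Theorem~\ref{thm:simulate} via the Peak-to-Average lemma.

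\emph{Protocol.} Alice builds $D$ on $I$, samples each of the at most $S$ updated cells independently with probability $p$, and transmits the addresses and contents of the sampled cells together with the entire cache. By a Chernoff bound, the sample has more than $8pS$ cells with probability only $2^{-\Omega(pS)}$. On this bad event Alice sends a dummy string, which is where the subtractive $2^{-\Omega(pS)}$ originates; otherwise the message fits in $(8pS+S_\cache)w$ bits. Bob outputs $\sign\bigl(\E_I[\cP(Q, I)\mid M]\bigr)$, so his achieved bias is exactly $\adv(G_\cP, \cD_Q, \cD_I, (8pS+S_\cache)w)$.

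\emph{Peak-to-Average setup.} Fix a query $Q$. View the deterministic query algorithm as producing from each $I$ a probe sequence $z(Q, I)\in\Sigma^{t_\total}$ over the alphabet $\Sigma=\{0,1\}^w$ of cell contents. Define
\[
f_Q(z)\;=\;\E_I\bigl[\cP(Q, I)\cdot\mathbf{1}[z(Q, I)=z]\bigr].
\]
The zero-mean hypothesis yields $\sum_z f_Q(z)=0$, and $|\cP(Q, I)|\leq 1$ gives $\sum_z |f_Q(z)|\leq 1$, matching the $L_1$-normalization of Lemma~\ref{lem_peak_to_avg}. The ``peak'' $\|f_Q\|_\infty\geq\varepsilon$ is extracted by combining $|\cP(Q, I)|\geq\beta$ with the sampling: on the event that Alice's sample includes every updated cell probed by the query (probability $p^{t_q(Q, I)}$), Bob's posterior concentrates on a single leaf contributing weight at least $\beta$, and unfolding the sampling and input randomness yields $\varepsilon=\beta\cdot p^{O(t_q)}$, i.e.\ $\log 1/\varepsilon=O(\log 1/\beta+t_q\log 1/p)$, against the effective dimension $k=O(t_\total)$.

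\emph{Finishing with the lemma.} Lemma~\ref{lem_peak_to_avg} now produces a coordinate set $Y_Q$ of size $|Y_Q|=O(\sqrt{t_\total(\log 1/\beta+t_q\log 1/p)})$ such that the projection of $f_Q$ onto $Y_Q$ still has $L_1$-mass at least $\exp(-\sqrt{t_\total(\log 1/\beta+t_q\log 1/p)})$; this projected $L_1$-mass is precisely the bias Bob would obtain were the message to reveal the $|Y_Q|$ projected coordinates. Because updated cells are sampled independently with probability $p$, all cells corresponding to $Y_Q$ along the realized probe path lie in the sample with probability at least $p^{|Y_Q|}$, on which event Bob learns the projected values directly. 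Combining, averaging over $Q$, and subtracting the Chernoff loss yields
\[
\adv\;\geq\; p^{|Y_Q|}\cdot\exp\bigl(-\sqrt{t_\total(\log 1/\beta+t_q\log 1/p)}\bigr)-2^{-\Omega(pS)}\;=\;2^{-O(\sqrt{t_\total(t_q\log 1/p+\log 1/\beta)}\cdot\log 1/p)}-2^{-\Omega(pS)}.
\]
The main obstacle is that the probe sequence is adaptive---cell addresses depend on previously read contents---so $Y_Q$ must be interpreted against the realized path rather than against a fixed set of cells, and Markov steps are required to replace the realized $t_\total(Q, I)$ and $t_q(Q, I)$ by their expectations while losing only constant factors in the exponent. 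Cleanly coupling the input randomness, Alice's sampling, and the query distribution, while extracting a genuine peak of magnitude $\beta\cdot p^{O(t_q)}$, is where the technical work concentrates.
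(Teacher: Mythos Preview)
Your proposal does not prove the stated lemma at all. The statement you were asked to prove is the Peak-to-Average lemma itself: given a function $f:\Sigma^k\to\mathbb{R}$ with $\|f\|_1\le 1$ and $\|f\|_\infty\ge\varepsilon$, exhibit a small coordinate set $Y$ on which the projected $L_1$-mass survives. What you wrote is instead a sketch of the proof of Theorem~\ref{thm:simulate} (the simulation theorem), which \emph{applies} the Peak-to-Average lemma as a black box. You even say so explicitly in your plan (``via the Peak-to-Average lemma''). Nothing in your text addresses why such a $Y$ exists or how to find it; you simply invoke Lemma~\ref{lem_peak_to_avg} to obtain $Y_Q$ and then continue with the protocol analysis.

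For comparison, the paper does not give a fresh proof of the Peak-to-Average lemma either: it is quoted from~\cite{LWY18}. The only argument the paper supplies is the one-line observation that the restriction $\varepsilon\ge 2^{-O(k)}$ present in~\cite{LWY18} can be dropped, because when $\varepsilon$ is that small one may take $Y=[k]$ and then $\sum_{y}\bigl|\sum_{z|_Y=y}f(z)\bigr|=\sum_z|f(z)|\ge\varepsilon$, which already meets the claimed bound. A correct submission here should either reproduce the combinatorial/analytic argument from~\cite{LWY18} that produces $Y$ of size $O(\sqrt{k\lg(1/\varepsilon)})$, or at minimum give this trivial-range extension; a proof of Theorem~\ref{thm:simulate} is not a proof of Lemma~\ref{lem_peak_to_avg}.
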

Note that the lemma as stated in~\cite{LWY18} requires $\varepsilon\geq 2^{-O(k)}$.
However, the statement trivially holds when $\varepsilon\leq 2^k$, since we could simply let $Y=[k]$ and we have
\[
	\sum_{y\in\Sigma^Y}\left|\sum_{z|_Y=y} f(z)\right|=\sum_z \left|f(z)\right|\geq \varepsilon.
\]

We are ready to give the proof of Theorem~\ref{thm:simulate}
\begin{proof}[Proof of Theorem~\ref{thm:simulate}]
	By Markov's inequality and union bound, with probability $1/2$ over a random $Q\in\cD_Q$, we have
	\[
		\E_{I}[t_{\total}(Q, I)]\leq 4t_{\total} \qquad\textrm{and}\qquad \E_{I}[t_q(Q, I)]\leq 4t_q.
	\]
	Denote this set of queries by $\cQ'$.
	Since \eqref{eqn_adv} is an expectation of a nonnegative term, it suffices to prove the lower bound for $Q\in\cQ'$.
	Thus, we may assume without loss of generality for every $Q$, $t_{\total}$ and $t_q$ are upper bounds on the expected times over a random input, and this only loses constant factors.

	To construct a protocol with large advantage, we use a similar strategy to~\cite{LWY18} to simulate data structure $D$.
	We first let Alice simulate the preprocessing algorithm of $D$ on $I$.
	Then Alice samples every \emph{updated} cells with probability $p$, and sends Bob the sampled cells, denoted by $C_0$.
	Then Alice uses public randomness to sample every cell with probability $p$, and sends Bob among the sampled cells, which ones are updated and their contents.
	Denote this set of sampled cells by $C_1$.
	Finally, Alice sends Bob all cells in cache, denoted by $C_2$.
	The triple $(C_0,C_1,C_2)$ forms the message.
	See Figure~\ref{figure:pi} for a formal description of the protocol.

	\begin{figure}
	\begin{tabular}{|l|}
	\hline
	\begin{minipage}{\algwidth}
	\vspace{1ex}
	\begin{center}
	\textbf{One-way protocol $\pi$ for $G_\cP$}
	\end{center}
	\vspace{0.5ex}
	\end{minipage}\\
	\hline
	\begin{minipage}{\algwidth}
	\vspace{1ex}

	By ``sending a cell'', we mean sending the address and content of the cell.
	We use $C\leftarrow z$ to denote that memory cells $C$ have content $z$.
	\begin{enumerate}
	    \item Alice generates the memory state and cache of $D$ by simulating the data structure on $I$.
	    \item\label{step_c0} Alice samples each \emph{updated} cell independently with probability $p$. 
	    Let $C_0$ be the set of sampled cells.
	    If $|C_0|\geq 2pS$, Alice sends a bit $0$ and aborts.
	    Otherwise, she sends a bit $1$, followed by all cells in $C_0$. 
	    \item\label{step_c1} Alice uses \emph{public randomness} to sample \emph{every cell} independently with probability $p$. 
	    Let $C_1$ be the set of sampled cells.
	    If there are at least $2pS$ \emph{updated} cells in $C_1$
	    Alice sends a bit $0$ and aborts.
	    Otherwise, she sends a bit $1$, followed by all \emph{updated} cells in $C_1$.
	    \item\label{step_c2} Alice sends Bob all cells \emph{in cache}.
	    Denote this set of cells by $C_2$.
	\end{enumerate}
	\paragraph{(For the purpose of analysis only:)}
	\begin{enumerate}
	    \setcounter{enumi}{4}
	    \item\label{step_bob} Bob generates the pre-initialized memory $M_0$ of $D$. 
	    Bob updates the contents of $C_0$ in $M_0$ and updates the cache $C_2$, obtain a memory state $M'$, and then simulates the query algorithm of $D$ on query $Q$ with memory state $M'$.
		Let $\csim$ be the set of (memory addresses of) cells probed by $D$ in this simulation. 
		\item\label{step_pta} 
	    Apply Lemma~\ref{lem_peak_to_avg} with $\Sigma:=[2^w]$, $k:= |\csim|, \varepsilon := \beta\cdot p^{4t_q}/4$, and
	    \[
	    	f(z):=\Pr_I[\csim\leftarrow z\mid C_0,C_2]\cdot \E_{I}\left[\cP(Q,I)\mid \csim\leftarrow z,C_0,C_2\right].
	    \]
	    If the lemma premises are satisfied and $k\leq 4t_{\total}$, let $Y\subset \csim$ be a subset of cells of size $ \kappa := |Y| \leq O\left(\sqrt{t_{\total} (t_q\lg 1/p+\lg 1/\beta)}\right)$, which is guaranteed to exist by the lemma.
	\end{enumerate}

	\vspace{0.3ex}
	\end{minipage}\\
	\hline
	\end{tabular}
	\caption{The one-way weak simulation protocol of data structure $D$. }\label{figure:pi}
	\end{figure}

	\paragraph{Communication cost.}
	By the description of the algorithm, Alice only sends $C_0$ if it has less than $2pS$ cells, and only sends $C_1$ if it has less than $2pS$ \emph{updated} cells.
	Thus, sending $C_0$ and $C_1$ takes at most $(4pS-2)\cdot 2w$ bits (since sending the address or content takes $w$ bits).
	On the other hand, sending $C_2$ takes at most $S_{\cache}\cdot w$ bits. 
	Thus, the total communication takes at most
	\[
		(4pS-2)\cdot 2w+2+S_{\cache}\cdot w\leq (8pS+S_{\cache})\cdot w
	\]
	bits as claimed.

	We also note that by Chernoff bound, the probability that $\left|C_0\right|\geq 2pS$ is at most $2^{-\Omega(pS)}$, and the probability that $C_1$ has at least $2pS$ updated cells is also at most $2^{-\Omega(pS)}$.
	Since $\left|\cP(Q, I)\right|\leq 1$, these two events could only reduce the advantage by at most $2^{-\Omega(pS)}$.
	In the following, we will assume that Alice always sends $C_0$ and the updated cells in $C_1$ \emph{regardless of their sizes}, and prove the advantage of the protocol is at least
	\[
		2^{-O(\sqrt{t_{\total}(t_q\log 1/p+\log 1/\beta)}\cdot \log1/p)},
	\]
	which would imply the theorem.

	\paragraph*{Analysis of the advantage.}
	Consider the sampled cells $C_0$, for each query $Q$, let $\cW_Q$ denote the event that $C_0$ contains \emph{all updated cells that $Q$ probes}.
	Then we know that 
	\[
		\Pr_{C_0, I}\left[\cW_Q\right]\geq p^{4t_q}\cdot \Pr_{I}\left[t_q(Q,I)\leq 4t_q\right]\geq p^{4t_q}/4.
	\]
	If Bob knew whether $\cW_Q$ happens (i.e., if $\cW_Q$ was a function of $Q$ and Alice's message $C_0,C_2$), then we would be done:
	When $\cW_Q$ happens, Bob has enough information to simulate the query algorithm on query $Q$, thus, Bob knows the query output, i.e., the \emph{sign} of $\cP(Q, I)$.
	That is, $\cP(Q, I)$ has a fixed sign conditioned on $(C_0, C_2)$ and $Q$ for which $\cW_Q$ happens, and we have
	\[
		\left|\E_I[\cP(Q, I)\mid C_0,C_2]\right|=\E_I[\left|\cP(Q, I)\right|\mid C_0,C_2]\geq \beta.
	\]
	In this case, we would have
	\[
		\E_{Q, C_0,C_2}\left[\left|\E_I[\cP(Q, I)\mid C_0,C_2]\right|\right]\geq \Pr[\cW_Q]\cdot \E_{Q, C_0,C_2 \mid \cW_Q}\left[\left|\E_I[\cP(Q, I)\mid C_0,C_2]\right|\right]\geq \Omega(\beta\cdot p^{4t_q}),
	\]
	where the first inequality uses the assumption that $\cW_Q$ is determined by $Q$ and $C_0,C_2$.
	This is a much better advantage than what we claim.

	However, Bob \emph{does not} have enough information to determine whether all updated cells probed by $Q$ are sampled in $C_0$.
	This is because for the cells that are not in $C_0$ (or $C_2$), Bob cannot tell if they are updated but not sampled, or they are not updated and still have the pre-initialized contents.
	What Bob \emph{can} do is to generate the pre-initialized memory and update all cells in $C_0$ and $C_2$, then simulate the query algorithm on $Q$ on this memory state (see Step~\ref{step_bob} in Figure~\ref{figure:pi}). 
	This simulation can be incorrect, as Bob may not have the correct contents of the necessary cells.
	The simulation may even probe incorrect cells, as the query algorithm may be adaptive. 
	
	Nevertheless, let $\csim$ be the set of cells probed in this simulation.
	Now fix $\addr(\csim)$, the address of $\csim$, and let us consider the posterior distribution of the \emph{true} contents of $\csim$ in Bob's view, i.e., $\cont(\csim)\mid C_0,C_2$.\footnote{We use $\addr(C)$ to denote the addresses of cells $C$, and $\cont(C)$ to denote the contents of cells $C$.}
	Since $\cW_Q$ does happen with a nontrivial probability, there is a nontrivial probability that the simulation is in fact correct.
	Let $z^*$ be the content of $\csim$ that Bob uses for the simulation.
	This means that \emph{on average}, $z^*$ is slightly more likely than most other contents in this posterior distribution.
	Moreover, when the true content of $\csim$ is $z^*$, the expectation of $\cP(Q, I)$ conditioned on $\csim\leftarrow z^*$ is at least $\beta$ in the absolute value (the sign of $\cP(Q, I)$ is determined in this case).
	We define function $f$, which maps possible contents of $\csim$ to $\mathbb{R}$, to be the expectation of $\cP(Q, I)$ (i.e., the advantage) conditioned on the content, weighted by the probability in the posterior distribution (Step~\ref{step_pta}).
	Then $f$ takes a higher-than-average (absolute) value at $z^*$.
	The Peak-to-Average Lemma (Lemma~\ref{lem_peak_to_avg}) ensures that there is a \emph{small} subset $Y\subseteq \csim$ such that knowing the contents of $Y$ gives a nontrivial advantage.
	We further let Alice send another random set $C_1$ (Step~\ref{step_c1}).
	The final lower bound on the advantage comes from the event that $Y$ is contained in $C_1$.
	In this case, Bob can identify this event, since he can compute $Y$.
	Furthermore, it happens with a nontrivial probability since $Y$ is small, and once it happens, the advantage is nontrivial by Lemma~\ref{lem_peak_to_avg}.
	The formal argument goes as follows.

	\bigskip

	Let us fix $Q$ and consider a random $I$.
	The assumption on the query time of $Q$ (as we only consider $Q\in\cQ'$) guarantees that
	\[
		\E_{I}\left[t_{\total}(Q, I)\right]\leq t_{\total} \qquad\textrm{and} \qquad \E_{I}\left[t_{q}(Q, I)\right]\leq t_{q}.
	\]
	By Markov's inequality and union bound, we have $t_{\total}(Q, I)\leq 4t_{\total}$ and $t_q(Q, I)\leq 4t_q$ with probability at least $1/2$ (over the randomness of $I$).
	In this case, the probability of $\cW_Q$ ($C_0$ contains all updated cells that $Q$ probes) is
	\[
		\Pr_{C_0, I}\left[\cW_Q\mid t_{\total}(Q, I)\leq 4t_{\total},t_q(Q, I)\leq 4t_q\right] \geq p^{4t_q}.
	\]
	Therefore, we have
	\begin{equation}\label{eqn_avg_good}
		\Pr_{C_0, I}\left[\cW_Q\wedge t_{\total}(Q, I)\leq 4t_{\total}\right] \geq\Pr_{C_0, I}\left[\cW_Q\wedge t_{\total}(Q, I)\leq 4t_{\total}\wedge t_q(Q, I)\leq 4t_q\right] \geq p^{4t_q}/2.
	\end{equation}

	Now we say a pair $(C_0,C_2)$ is \emph{good} for query $Q$, if the simulation gives $\left|\csim\right|\leq 4t_{\total}$, and the event $\cW_Q$ is ``not-too-unlikely'' to happen conditioned on the pair: $\Pr_{I}\left[\cW_Q\mid C_0,C_2\right]\geq p^{4t_q}/4$.
	By~\eqref{eqn_avg_good}, Markov's inequality and the fact that $C_2$ is determined by $I$, we have
	\[
		\Pr_{C_0, C_2}\left[\Pr_{I}\left[\cW_Q\wedge t_{\total}(Q, I)\leq 4t_{\total}\mid C_0,C_2\right]\geq p^{4t_q}/4\right]\geq p^{4t_q}/4.
	\]
	This implies that $(C_0,C_2)$ is good for $Q$ with probability at least $p^{4t_q}/4$, since $\cW_Q$ implies that the simulation is correct, and we must have $\left|\csim\right|=t_{\total}(Q, I)$.

	In the following, let us only focus \emph{good} $(C_0, C_2)$ pairs. 
	Consider the \emph{posterior distribution} over the contents of $\csim$ \emph{conditioned on} a good pair $(C_0, C_2)$. 
	Let $z$ be a possible content of $\csim$.
	Define $f(z)$ to be the probability of $z$, multiplied by the expected value of $\cP(Q, I)$ conditioned on $z$ in this distribution (Step~\ref{step_pta}),
	\[
		f(z):=\Pr_I[\csim\leftarrow z\mid C_0,C_2]\cdot \E_{I}\left[\cP(Q, I)\mid \csim\leftarrow z, C_0, C_2\right].
	\]
	Intuitively, $f(z)$ is the \emph{contribution} to $\E[\cP(Q, I)\mid C_0,C_2]$ when $\csim\leftarrow z$.
	We have following claim about good pairs.

	\begin{claim}
		When $(C_0,C_2)$ is a \emph{good} pair for $Q$, the lemma premises are satisfied and $k\leq 4t_{\total}$ in Step~\ref{step_pta}.
	\end{claim}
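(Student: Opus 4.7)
The plan is to verify three things when $(C_0, C_2)$ is good for $Q$: that $k = |\csim| \leq 4 t_{\total}$, that $\sum_z |f(z)| \leq 1$, and that $\max_z |f(z)| \geq \varepsilon = \beta p^{4t_q}/4$. The first is immediate from the definition of a good pair, which explicitly requires $|\csim| \leq 4 t_{\total}$. For the second, since $|\cP(Q, I)| \leq 1$ one has $|f(z)| \leq \Pr_I[\csim \leftarrow z \mid C_0, C_2]$, and summing over all possible contents of $\csim$ gives a total mass of at most $1$.

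The work is in exhibiting a $z^*$ with $|f(z^*)| \geq \varepsilon$. I would take $z^*$ to be the string of contents that Bob's simulation of $D$ on memory state $M'$ reads from $\csim$; note that $z^*$ is determined by $(Q, C_0, C_2)$ and the pre-initialized memory, and therefore is a valid single candidate for the maximum. I then plan to lower-bound $|f(z^*)|$ by establishing separately that $\Pr_I[\csim \leftarrow z^* \mid C_0, C_2] \geq p^{4t_q}/4$ and that $|\E_I[\cP(Q,I) \mid \csim \leftarrow z^*, C_0, C_2]| \geq \beta$; multiplying these yields $|f(z^*)| \geq \beta p^{4t_q}/4 = \varepsilon$ as required. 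The first factor is handled by noting that the event $\cW_Q$ forces $\csim \leftarrow z^*$: whenever every updated cell probed by the real query is already contained in $C_0$, the real memory agrees with $M'$ on every cell the real query touches, so the real probe set coincides with $\csim$ and the observed contents equal $z^*$. The definition of a good pair then gives $\Pr_I[\csim \leftarrow z^* \mid C_0, C_2] \geq \Pr_I[\cW_Q \mid C_0, C_2] \geq p^{4t_q}/4$.

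The main obstacle is the second factor: I need that conditioning on the weaker event $\csim \leftarrow z^*$ (not on the stronger event $\cW_Q$) already fixes the sign of $\cP(Q, I)$. I would argue this by induction on the probe sequence of $D$. The first cell probed by the real query algorithm on any input $I$ depends only on $Q$ and the cache $C_2$, both of which are shared with Bob's simulation, so the two algorithms probe the same first address; assuming the real memory agrees with $z^*$ on $\csim$, they read the same content there and make the same choice of next probe; iterating, the real probe set equals $\csim$ and the real output coincides with Bob's simulated output, which in turn equals $\sign \cP(Q, I)$ by correctness of $D$. The subtle point is that $M'$ may disagree arbitrarily with the real memory outside $\csim$, but the induction shows that none of those cells are ever probed, so the discrepancy is harmless. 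Because $|\cP(Q, I)| \geq \beta$ always, $\cP(Q, I)$ has a fixed sign under this conditioning, and therefore $|\E_I[\cP(Q, I) \mid \csim \leftarrow z^*, C_0, C_2]| = \E_I[|\cP(Q, I)| \mid \csim \leftarrow z^*, C_0, C_2] \geq \beta$, completing the verification of the Peak-to-Average premises.
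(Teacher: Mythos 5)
Your proposal is correct and follows essentially the same route as the paper: $k\leq 4t_{\total}$ and $\sum_z|f(z)|\leq 1$ are immediate, and the peak is witnessed by the same $z^*$ (Bob's simulated contents), using that $\cW_Q$ forces $\csim\leftarrow z^*$ so the good-pair condition gives $\Pr_I[\csim\leftarrow z^*\mid C_0,C_2]\geq p^{4t_q}/4=\varepsilon/\beta$, while conditioning on $\csim\leftarrow z^*$ fixes the query output and hence $|\E_I[\cP(Q,I)\mid\csim\leftarrow z^*,C_0,C_2]|\geq\beta$. Your probe-by-probe induction simply spells out the step the paper summarizes as ``the query algorithm is correctly simulated,'' so there is no substantive difference.
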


	To see this, by the definition of a good pair, we have $\left|\csim\right|\leq 4t_{\total}$, i.e., $k\leq 4t_{\total}$.
	For the premises of Lemma~\ref{lem_peak_to_avg}, clearly, $f$ is a function mapping $\Sigma^k$ to $\mathbb{R}$.
	The first condition on $f$ is always satisfied: Since $\left|\cP(Q, I)\right|\leq 1$, we have
	\begin{align*}
		\sum_{z\in\Sigma^k} \left|f(z)\right|&=\left|\sum_{z\in\Sigma^k} \Pr[\csim\leftarrow z\mid C_0,C_2]\cdot \E_I[\cP(Q,I)\mid \csim\leftarrow z,C_0,C_2]\right| \\
		&\leq \sum_{z\in\Sigma^k} \Pr[\csim\leftarrow z\mid C_0,C_2]\cdot \E_I[\left|\cP(Q,I)\right|\mid \csim\leftarrow z,C_0,C_2] \\
		&\leq \sum_{z\in\Sigma^k} \Pr[\csim\leftarrow z\mid C_0,C_2] \\
		&=1.
	\end{align*}
	For the second condition on $f$, since $(C_0, C_2)$ is good, $\Pr_{I}\left[\cW_Q\mid C_0,C_2\right]\geq p^{4t_q}/4=\varepsilon/\beta$.
	When $\cW_Q$ happens, Bob will use the correct contents of $\csim$ for the simulation.
	Let $z^*$ be the content Bob uses in Step~\ref{step_bob}.
	Hence, a good pair implies $\Pr_I[\csim\leftarrow z^*\mid C_0,C_2]\geq \varepsilon/\beta$.
	Moreover, conditioned on $\csim\leftarrow z^*$ and $(C_0, C_2)$, the query output of $Q$ is determined (since the query algorithm is correctly simulated).
	That is, the \emph{sign} of $\cP(Q, I)$ is determined.
	Therefore, we have
	\begin{align*}
		\left|f(z^*)\right|&=\Pr_{I}\left[\csim\leftarrow z^*\mid C_0,C_2\right]\cdot \left|\E_{I}\left[\cP(Q, I)\mid \csim \leftarrow z^*, C_0, C_2\right]\right|\\
		&=\Pr_{I}\left[\csim\leftarrow z^*\mid C_0,C_2\right]\cdot \E_{I}\left[\left|\cP(Q, I)\right|\mid \csim\leftarrow z^*, C_0, C_2\right]\\
		&\geq \varepsilon,
	\end{align*}
	where the last inequality uses the fact that $\left|\cP(Q, I)\right|\geq \beta$ and $\Pr_I[\csim\leftarrow z^*\mid C_0,C_2]\geq \varepsilon/\beta$.

	Thus, Lemma~\ref{lem_peak_to_avg} guarantees that there exists a set $Y\subseteq \csim$ of size $$\kappa\leq O(\sqrt{k\cdot\log 1/\varepsilon})\leq O(\sqrt{t_{\total}(t_q\log 1/p+\log 1/\beta)}),$$ such that
	\begin{equation}\label{eqn_Y}
		\sum_{y\in \Sigma^Y}\left|\sum_{z\mid_Y=y}f(z)\right|\geq 2^{-O(\sqrt{t_{\total}(t_q\log 1/p+\log 1/\beta)})}.
	\end{equation}
	Note that Bob knows the set $Y$, since it is determined by the pair $(C_0, C_2)$ (and $Q$).
	Note that the LHS of the inequality is the expected advantage conditioned only on $Y$, i.e., 
	\begin{align}
		\sum_{y\in \Sigma^Y}\left|\sum_{z\mid_Y=y}f(z)\right|&=\sum_{y\in \Sigma^Y}\left|\sum_{z\mid_Y=y}\Pr\left[\csim\leftarrow z\mid C_0, C_2\right]\cdot \E_{I}\left[\cP(Q, I)\mid \csim\leftarrow z, C_0, C_2\right]\right| \nonumber\\
		&=\sum_{y\in \Sigma^Y}\left|\Pr[Y\leftarrow y\mid C_0, C_2]\cdot\sum_{z\mid_Y=y}\Pr[\csim\leftarrow z\mid Y\leftarrow y,C_0,C_2] \cdot \E_I\left[\cP(Q, I)\mid \csim\leftarrow z, C_0, C_2\right]\right|. \nonumber \\
		&=\sum_{y\in \Sigma^Y}\left|\Pr[Y\leftarrow y\mid C_0, C_2]\cdot \E_I\left[\cP(Q, I)\mid Y\leftarrow y, C_0, C_2\right]\right|. \label{eqn_Y2}
	\end{align}

	Finally, Alice further samples every cell with probability $p$ using public randomness, obtains $C_1$, and sends all updated cells to Bob (Step~\ref{step_c1}).
	Then with probability $p^{\kappa}$, we have $C_1\supseteq Y$.
	Note that Bob gets to know the contents of \emph{all} cells in $C_1$: If a cell is not updated, then it has the pre-initialized content.
	Furthermore, for a good pair $(C_0, C_2)$ and $\addr(C_1)$ such that $C_1\supseteq Y$, the expected advantage is
	\begin{align*}
		&\quad\,\E_{\cont(C_1)\mid C_0, C_2}\left[\left|\E_{I}\left[\cP(Q, I)\mid C_0,C_1,C_2\right]\right|\right] \\
		&\geq \E_{\cont(Y)\mid C_0, C_2}\left[\left|\E_{I}\left[\cP(Q, I)\mid C_0,\cont(Y),C_2\right]\right|\right] \\
		&=\sum_{y\in \Sigma^Y}\Pr[Y\leftarrow y\mid C_0, C_2]\cdot \left|\E_{I}\left[\cP(Q, I)\mid C_0,C_2,Y\leftarrow y\right]\right| \\
		&\geq 2^{-O(\sqrt{t_{\total}(t_q\log 1/p+\log 1/\beta)})},
	\end{align*}
	where the last inequality is by Equation~\eqref{eqn_Y} and~\eqref{eqn_Y2}.

	To prove that $\pi$ has the claimed advantage, we have for every $Q\in \cQ'$,
	\begin{align*}
		&\quad\,\,\E_{\msg}\left[\left|\E_I\left[\cP(Q, I)\mid \msg\right]\right|\right] \\
		&\geq \Pr\left[(C_0,C_2)\textrm{ is good}\wedge C_1\supseteq Y\right]\cdot \E_{\msg}\left[\left|\E_I\left[\cP(Q, I)\mid \msg\right]\right|\mid (C_0,C_2)\textrm{ is good}\wedge C_1\supseteq Y\right] \\
		&\geq \Omega(p^{4t_q}\cdot p^{\kappa})\cdot \E_{C_0,C_2,\addr(C_1)}\left[\E_{\cont(C_1)\mid C_0,C_2}\left[\left|\E_I\left[\cP(Q, I)\mid C_0,C_1,C_2\right]\right|\right]\mid (C_0,C_2)\textrm{ is good}\wedge C_1\supseteq Y\right] \\
		&\geq \Omega(p^{4t_q}\cdot p^{\kappa})\cdot2^{-O(\sqrt{t_{\total}(t_q\log 1/p+\log 1/\beta)})} \\
		&=2^{-O(\sqrt{t_{\total}(t_q\log 1/p+\log 1/\beta)}\cdot \log1/p)}.
	\end{align*}
	This proves the theorem.
\end{proof}

\bibliographystyle{abbrv}
\bibliography{refs}

\end{document}